\title{Optimal Repair For Omega-regular Properties} 
\author{Vrunda Dave \inst{1}  \and Shankara Narayanan Krishna \inst{1} \and \\  Vishnu Murali \inst{2} (\Letter) \and Ashutosh Trivedi \inst{2}}
\authorrunning{V. Dave, S. Krishna, V. Murali and A. Trivedi}
\institute{{Indian Institute of Technology Bombay} \and {University of Colorado Boulder}}
\begin{document}
\maketitle

\begin{abstract}
  This paper presents an optimization based framework to automate system repair against omega-regular properties. 
  In the proposed formalization of {\it optimal repair}, the systems are represented as
  Kripke structures, the properties as $\omega$-regular languages, and the repair space as \emph{repair machines}---weighted omega-regular transducers equipped with \buchi conditions---that rewrite 
  strings and associate a cost sequence to these rewritings. 
  To translate the resulting cost-sequences to easily interpretable payoffs, we consider several aggregator functions to map cost sequences to numbers---including
  limit superior, supremum, discounted-sum, and average-sum---to define quantitative cost semantics. 
  The problem of optimal repair, then,  is to determine whether traces from a given
  system can be rewritten to satisfy an $\omega$-regular property when the allowed cost is bounded by a given threshold.
  We also consider the dual challenge of {\it impair verification} that assumes that the rewritings are resolved adversarially under some given cost restriction, and asks to decide if all traces of the system satisfy the specification irrespective of the rewritings. 
  With a negative result to the impair verification problem, we study the problem of designing a minimal mask of the Kripke structure such that the resulting traces satisfy the specifications despite the threshold-bounded impairment. 
  We dub this problem as the \emph{mask synthesis} problem.
  This paper presents automata-theoretic solutions to repair synthesis, impair verification, and mask synthesis problem for limit superior, supremum, discounted-sum, and average-sum cost semantics. 
\end{abstract}

\section{Introduction} 
\label{sec:intro}
Given a Kripke structure and an $\omega$-regular specification, the  
model checking problem is to decide whether all traces of the system satisfy the specification. 
Vardi and Wolper~\cite{vardi1986automata} initiated the automata-theoretic approach 
to model-checking by reducing the $\omega$-regular model checking problem to the 
language inclusion problem. 
If the system violates the specification, this approach returns a simple lasso-shaped 
counterexample demonstrating the violation. 
While these counterexamples often aid the designer in manually repairing the system, this 
repair process can be exhausting and error-prone. 
Moreover, different repair policies may incur different costs rendering the repair 
problem a non-trivial optimization problem.
\emph{This paper investigates a range of problems in synthesizing optimal repair policies against $\omega$-regular specification.}

As a concrete motivation for various repair problems, we consider security issues (confidentiality and availability) in manufacturing.
It is well documented~\cite{confidentiality_printer_side_channel} that acoustic
side-channels leak valuable intellectual property information during the
manufacturing process.
Consider a $3$D printer which can print either squares or
triangles.
Since the movement of the stepper motors of the printer vary based on the
design, this difference in movement leads to the printer producing different
sounds. Thus, an intruder may be able to discern the shape being printed by
observing the audio output of the system as it acts as an acoustic
side-channel.
One can model such a system as a \emph{Kripke structure}: a mockup of such systems is
represented in Figure~\ref{fig:eg_Kripke} where the label corresponds to the
state being idle ($\bot$), printing squares ($\square$), or printing triangles
($\triangle$).
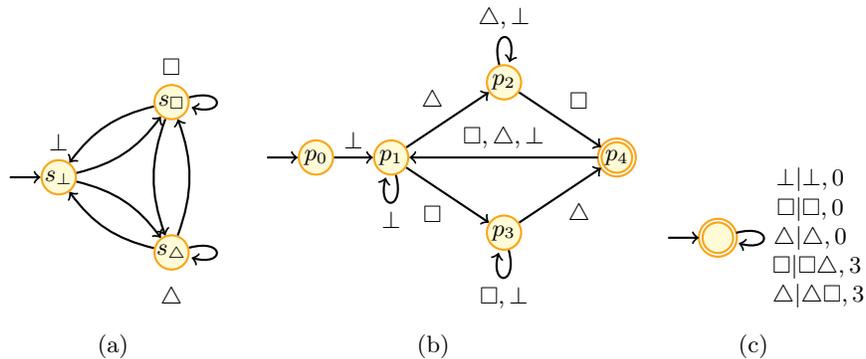
\begin{figure}[b!]
	\begin{subfigure}[t]{0.28\textwidth}
		\centering
		\begin{tikzpicture}[->,thick]
			\node[initial left,cir,initial text={}, label =
		${\bot}$]  (s0) {$s_{\bot}$} ;
			
			\node[cir, label = {$\square$}] at (1.5,1) (s1) {$s_{\square}$};
			\node[cir, label = {[yshift = -1.1cm] $\triangle$}] at
			(1.5,-1) (s2) {$s_{\triangle}$}; 
			\path (s0) edge[bend right = 20] node{} (s1);
			\path (s0) edge[bend left = 20] node{} (s2);
			\path (s1) edge[bend right = 20] node{} (s2);
			\path (s2) edge[bend right = 20] node{} (s1);
			\path (s2) edge[bend left = 20] node{} (s0);
			\path (s1) edge[bend right = 20] node{} (s0);
			\path (s2) edge[loop right] node{} (s2);
			\path (s1) edge[loop right] node{} (s1);
 		\end{tikzpicture}
		\caption{}
		\label{fig:eg_Kripke}
	\end{subfigure}
\begin{subfigure}[t]{0.40\textwidth}
	\centering
	\begin{tikzpicture}[->,thick]
		\node[initial, cir, initial text = {}] (p0) {$p_0$}; 
		\node[ cir ] at (1,0) (p1) {$p_1$};
		\node [cir ] at (2.5,1) (p2) {$p_2$};
		\node [cir] at (2.5,-1) (p3) {$p_3$};
		\node[cir,accepting] at (4,0) (p4) {$p_4$};
		\path (p0) edge node[above]{${\bot}$} (p1); 
		\path (p1) edge node[xshift = -0.2cm, yshift = 0.25cm]{$~\triangle~$} (p2);
		\path(p1) edge[loop below] node[midway, below]{$\bot$} (p1);
		\path (p2) edge[loop above] node[midway, above] {$\triangle, \bot$} (p2);
		\path (p1) edge node[yshift  = -0.25cm, xshift = -0.2cm] {$~\square~$} (p3); 
		\path (p3) edge[loop below] node[below] {$\square, \bot$} (p3);
		\path (p2) edge node[xshift = 0.25cm, yshift = 0.25cm] {$\square$} (p4);
		\path (p3) edge node[xshift = 0.25cm, yshift = -0.25cm] {$\triangle$} (p4);
		\path (p4) edge node[midway, above] {$\square, \triangle, \bot$} (p1);
	\end{tikzpicture}
	\caption{}
	\label{fig:omega_prop}
	\end{subfigure}
	\begin{subfigure}[t]{0.28\textwidth}
		\centering
		\begin{tikzpicture}[->,thick]
			\node[initial,cir,initial text={},accepting] (A) {};
			\path (A) edge[loop right] node [text
			width=1cm,align=center, right]{$\bot  | \bot, 0$ \\ $\square|\square,0$ \\ $\triangle|\triangle,0$ \\ $\square|\square\triangle,3$ \\ $\triangle|\triangle\square,3$} (A);
		\end{tikzpicture}
		\caption{}
		\label{fig:eg_transducer}
	\end{subfigure}
	\caption{ (a) Krikpe structure representing the $3$D printer System, (b) \buchi automaton $B$ specifying the property, and (c) Repair machine}
	\vspace{-2em}
\end{figure}

Suppose that the system designer wishes to protect the information that a given printer prints
only a fixed number of objects of one shape, or the sequence in which these shapes appear,  from an eavesdropper.
This specification, and a rich class of similar specifications on the observations,
can be captured using $\omega$-regular languages (see the B\"uchi automaton of
Figure~\ref{fig:omega_prop} which requires that both shapes are printed infinitely often), and one can verify if the system satisfies such a
specification using classical model checking. It is easy to see that our system
does not satisfy this property for all traces.
To repair this situation, we may wish to add spurious motor rotations to mimic
the other shape, but adding such rotations comes with a cost (say energy
or time overheads).
The choices and cost available for repair can intuitively be expressed as a
repair machine (a weighted nondeterministic transducer)  given in
Figure~\ref{fig:eg_transducer}.
For example, the label $\square | \square\triangle,3 $ represents the situation
where the repair machine modifies the observation corresponding to a square
shape by appending a spurious rotation mimicking a triangle shape with an extra
cost of $3$ units.  

A key synthesis problem, then, is to compute a minimum cost repair strategy to add these
spurious rotations such that the system after repair satisfies the
specification.
The cost of an $\omega$-sequence can be aggregated using discounted-sum, average-sum, liminf, limsup, inf, or sup.
We call this problem the \emph{repair synthesis} where the goal is given an aggregator and cost threshold, design a strategy on the nondeterministic transducer such that every trace of the system can be written to satisfy the specification with cost bounded by the given threshold. 

\begin{example}
Consider the repair machine $T$ from Figure~\ref{fig:eg_transducer} with the average-sum cost semantics and a threshold of $2$.
For every spurious motor rotation, $T$ incurs a cost of $3$ units of power. Note that a strategy of replacing every $\triangle$ with $\triangle\square$, maps $\bot \triangle^{\omega}$ to $\bot (\triangle \square)^{\omega}$ which is accepted by $B$. The mean cost of this rewrite is $3$ and is above threshold.
However, there exists a strategy that rewrites $\bot \triangle^{\omega}$ to $\bot(\triangle\triangle\triangle\square)^{\omega}$ that is accepted by $B$, with a mean cost equal to $1$. 
\end{example}

A related problem is that of \emph{impair verification} that is connected to availability vulnerabilities. Consider an attack model in the aforementioned $3$D manufacturing setting where an attacker with bounded capabilities controls the rewriting process (by introducing subtle undetectable changes in the manufacturing process) and intends to rewrite the traces in such a way that the resulting trace satisfy some undesirable behavior (to make the acoustic profile violate some regulatory norms) with a cost bounded below a threshold.
Such undesirable rewritings may impair the capabilities of the system and render it unavailable for normal use.
The impair verification problem is to verify whether the system is safe from such adversarial rewritings.

If the system is found to be vulnerable to impair and the system designer has no control over the rewriting process, a viable mitigation approach is to minimally restrict the behavior of the system to harden it against the adversarial rewriting. We formalize this problem as the \emph{mask synthesis} problem.


\vspace{1em}
\noindent \textbf{Contributions.} We consider repair machines to be specified as
weighted $\omega$-transducers. 
We study various optimal repair problems for different aggregator functions.
As we deal with reactive systems, we consider cost semantics that aggregate infinite sequence of costs to a scalar via aggregator functions discounted sum, average sum, limit superior, and supremum.
We formalize and study the following problems related to optimal repair:
\begin{itemize}
\item {\bf Repair Synthesis.} Given a system, an $\omega$-regular specification, a repair machine, and a cost semantics, decide whether there exists a strategy to rewrite traces of the system to satisfy the specification within a given threshold.

\item {\bf Impair Verification.} Given a system, an $\omega$-regular property capturing the  undesirable behaviors, a repair machine, a cost semantics, decide whether there exist a trace of the system that satisfy the undesirable behavior under adversarial rewritings within a given threshold. 

\item {\bf Mask Synthesis.} Given a system, an $\omega$-regular property (undesirable behaviors), a repair machine, a cost semantics, find a minimal restriction of the system such that no remaining trace of the system satisfy the undesirable behavior under any adversarial rewritings within the threshold. 
\end{itemize}
We characterize the complexity of repair synthesis (Theorems~\ref{thm:dsum-synth}--\ref{thm:sup-synth}) and impair verification problems (Theorems~\ref{thm:dsum-verif}--\ref{thm:sup-verif}), and for the mask synthesis problem we discuss which aggregators allow $\omega$-regular mask (Theorems~\ref{thm:dsum2}--\ref{thm:sup-mask}). 
Due to space constraints, the proofs can be found in the Appendix.

\noindent \textbf{Related Work.}
Our use of repair machines is inspired by the idea of weighted transducers studied in~\cite{concur20} for finite strings. 
The notions of robust verification and kernel synthesis studied there are templates for the impair policy verification and mask synthesis problems studied here, but the present setting requires extension of those results to the setting of $\omega$-words: this is one of the secondary contributions of this paper.
Our results imply that the results presented in~\cite{concur20} in the context of robust verification and kernel synthesis carry over to the setting of $\omega$-words for the discounted-sum and mean cost-semantics, the robust verification problem for both of these can be decided in P (cf. Theorems~\ref{thm:dsum-verif} and~\ref{thm:mean-verif}), while the robust kernel for discounted-sum cost-semantics is $\omega$-regular if the language of the Kripke structure is a cut-point language (cf. Theorem~\ref{thm:dsum2}).
Furthermore, the notion of repair synthesis, to the best of our knowledge, is yet unexplored. 

 D'Antoni, Samanta, and Singh~\cite{cav16} presented \textsc{Qlose}, a program repair approach with quantitative objectives. 
The \textsc{Qlose} approach permits rewriting syntactical expressions with arbitrary expressions while keeping the control structure of the program intact. 
In comparison, our approach permits modification of the control structure albeit with a finite set of expressions (encoded as a finite alphabet) considered for rewriting. Consequently, our setting remains decidable as opposed to repair with \textsc{Qlose} that is, in general, undecidable, and
for tractability it restricts the correctness criterion to being correct over a given set of input-output examples.
Similarly, Samanta, Olivo, and Emerson~\cite{samanta_2014_cost} considered cost-aware program repair for Turing-complete programs through the use of predicate abstraction.
However, their cost function is dependent only on the program location as opposed to more general $\omega$-traces as proposed in our work.

Jobstmann, Griesmayer, and Bloem~\cite{jobstmann_2005_program}, and  von Essen and Jobstman~\cite{von_2015_program}, studied program repair as a two-player game with qualitative $\omega$-regular objectives. Our work, in contrast,  allows quantitative notions of repair costs.
Cerny and Henzinger~\cite{emsoft11} championed for the need of partial program synthesis, which can be thought of as a repair, though its aim is to complete the given partial program, with respect to the specification. 
Although not directly related to repair, the framework of model measuring~\cite{concur13} presents a notion of distance between models; it studies the problem that given a model $M$ and specification find the maximal distance  such that all models  within that distance from $M$ satisfy the specification. Bansal, Chaudhuri, and Vardi~\cite{bansal18} study comparator automata that read two infinite sequences of weights and relate their aggregate values to compare such quantitative systems. 
Kupferman and Tamir~\cite{kupferman_2012_coping} consider the problem of cheating, where they use weighted automata and a penalty function to determine if the environment is cheating.
The penalty function considered is again a map from a pair of letters to a value and so the environment is only permitted letter-to-letter rewritings. 
In contrast, our models permits more general letter-to-string rewritings constrained with $\omega$-regular objectives.

Chatterjee et al.~\cite{chatterjee_2017_quantitative} consider the problem of solving both quantitative and qualitative objectives and define the notion of implication games where the objective is to solve both. While we provide direct proofs, Theorems~\ref{thm:dsum-synth} and~\ref{thm:mean-synth} can also be recovered from results on implication games. 

\section{Preliminaries}
\label{sec:prelims}
Let $\Sigma$ denote a finite alphabet. We write  $\Sigma^{\omega}$ and $\Sigma^*$ for the set of infinite and finite words over $\Sigma$.
We denote an empty string by $\epsilon$.

\vspace{0.5em}
\noindent\textbf{Kripke Structures.}
A \emph{Kripke structure} is a tuple $K = (S,\hookrightarrow,S_0,AP,\mathcal{L})$
where $S$ denotes a set of states, $\hookrightarrow \subseteq S \times S$ is the
transition relation, $S_0 \subseteq S$ is the set of initial states, $AP$ is
the set of atomic propositions, and $\mathcal{L}: S \rightarrow 2^{AP}$
denotes the labeling function. 
An infinite sequence of states $\pi = s_0 s_1 \ldots \in S^\omega$ is said to be a
path of the Kripke structure if $(s_i, s_{i+1}) \in \hookrightarrow$ for all
$i \in \mathbb{N}$.
Let $\Sigma = 2^{AP}$.
The labeling function applied to a path $\pi = s_0 s_1 \ldots \in
S^\omega$ defines traces $\Ll(\pi) = a_0a_1\ldots\in\Sigma^\omega$ of $K$ where for each
$i \geq 0$ we have that $a_i = \Ll(s_i)$.
We use $\Tt_K$ to indicate the set of all traces of $K$.

\vspace{0.5em}
\noindent\textbf{Omega-Regular Specifications.}
A \emph{non-deterministic \buchi automaton} (NBA) over $\Sigma$ is a tuple $A =
(Q,\Sigma, Q_0,Q_f,\delta)$, where $Q$ is a finite set of states, $Q_0 \subseteq
Q$ is the set of initial states, $Q_f \subseteq Q$ is the set of final states,
$\Sigma$ is  the finite input alphabet,  and $\delta \subseteq
Q \times \Sigma \times Q$ denotes the transition relation.   
 We define the extended transition relation $\widehat{\delta} \subseteq Q {\times} \Sigma^* {\times} Q$ in the standard fashion, i.e. $(q, \epsilon, q) \in \widehat{\delta}$ for $q \in Q$ and $ax \in \Sigma\Sigma^*$ we have $(q, ax, q') \in \widehat{\delta}$ if  there exists $q'' \in Q$ such that $ (q, a, q'') \in \delta$ and $(q'', x, q') \in \widehat{\delta}$.

A run $\rho$ over a word $w = w_0 w_1 \ldots \in \Sigma^{\omega}$ is an infinite
sequence of states $q_0, q_1 \ldots$ such that $(q_i, w_i,
q_{i+1}) \in \delta$.
A run $\rho$ is accepting iff some final state from $Q_f$ occurs infinitely
often in $\rho$.
The language defined by the automaton $A$, denoted as $L(A)$, is the set of
words $w$ over $\Sigma^{\omega}$ such that there exists an accepting run of $w$
by $A$.

\vspace{0.5em}
\noindent\textbf{Cost Aggregation Semantics.}
An aggregator function $\C: \mathbb{N}^{\omega} \to \mathbb{Q}_{\geq 0} $ maps 
infinite sequences of numbers to a scalar. Let $\tau=\tau_1 \tau_2 \dots \in \mathbb{N}^{\omega}$ with each $\tau_i \in \mathbb{N}$. 
We consider the following aggregators:
\begin{itemize}
    \item  $\mathsf{DSum_\lambda} \rmdef \overline{\tau} \mapsto \lim_{n \to \infty} \sum\nolimits_{i=1}^{n} \lambda^{i-1} \tau_i$, with discount factor  $0 \leq \lambda < 1$,
    \item $\mathsf{Mean} \rmdef \overline{\tau} \mapsto \limsup_{n \to \infty } (1/n) \cdot \sum\nolimits_{i = 1}^{n} \tau_i$,
    \item $\mathsf{Sup}
                \rmdef \overline{\tau} \mapsto \sup \{ \tau_i \mid
                i \in \Nn\}$, and 
	\item  $\mathsf{LimSup} \rmdef \overline{\tau} \mapsto \limsup \{\tau_i \mid i \in \Nn\}.$
\end{itemize}

\noindent\textbf{Quantitative Games.}
A \emph{game arena} $\Gg = (G, V_\Min, V_\Max)$ consists of a graph  $G =(V,
E, w)$ where $V$ is a finite set of vertices, $E \subseteq V \times V$ is the set of edges, $w: E \to \mathbb{N}$ is the weight function.
The sets $V_\Max$ and $V_\Min$ characterize a partition of
the vertex set $V$ such that player Min controls the edges from vertices in
$V_\Min$, while Max controls the vertices in $V_\Max$.

A play of the game $\Gg$ is an infinite sequence of vertices $\pi = \seq{v_0,
v_1, \ldots}$ such that $(v_i,v_{i+1}) \in E$ for all $i \in \mathbb{N}$.
A finite play is a finite such sequence, that is, a sequence in $V^*$.
We denote by $\last(\pi)$ the final vertex in the finite play $\pi$.
We write $\play_\Gg$ and $\fplay_\Gg$ for the set of infinite and finite plays
of the game arena $\Gg$, respectively.
A strategy of player Min in $\Gg$ is a partial function $\sigma: \fplay \to V$ defined over 
 all plays $\pi \in \fplay$ with $last(\pi) \in V_{\min}$, such that we have $(\last(\pi), \sigma(\pi)) \in E$.
A strategy $\chi$ of player Max is defined analogously.
We say that a strategy $\sigma$ is \emph{positional} if $\last(\pi)
= \last(\pi')$ implies $\sigma(\pi) = \sigma(\pi')$. 
Strategies that are not positional are called \emph{history dependent}.
Let $\Sigma_\Min$ and $\Sigma_\Max$ be the sets 
of all strategies of player \Min{} and player \Max{}, respectively.
We write $\Pi_\Min$ and $\Pi_\Max$ for the set of positional strategies of
player \Min{} and player \Max{}, respectively. 
For a game arena $\Gg$, vertex $v$ of $\Gg$ and strategy pair
$(\sigma,\chi) \in \Sigma_\Min {\times} \Sigma_\Max$, let
$\play^{\sigma, \chi}(v)$ be the infinite play starting from $v$ in which player Min and Max
play according to $\sigma$ and $\chi$, respectively.  

The weight function $w: E \to \Nat$ can be naturally extended from edges to
plays as $w: \play_\Gg \to \Nat^\omega$ as $\pi \mapsto 
c_0c_1\ldots$ where $c_i = w(v_i, v_{i+1})$ for all $i \in \mathbb{N}$.
Given an aggregator function
$\C \in \{\mathsf{DSum_{\lambda}}, \mathsf{Mean}, \mathsf{Sup}, \mathsf{LimSup}\}$,
we define the payoff of player Min to player Max for a play $\pi$ as $\C(w(\pi))$.
Depending on the choice of the aggregator function $\C \in \{\mathsf{DSum_{\lambda}}, 
	\mathsf{Mean}, \mathsf{Sup}, \mathsf{LimSup}\}$, we refer to the game 
as $\C$-game. 
In a $\C$-game, the goal of player Min is to choose her actions in such a way so
as to minimize the payoff, while the goal of player Max is to maximize the
payoff.
For every vertex $v \in V$, define the \emph{upper value} $\UVAL_\C(\Gg, v)$ as the
minimum payoff player Min can ensure irrespective of player Max's strategy.
Symmetrically, the \emph{lower value} $\LVAL_\C(\Gg, v)$ of a vertex  $v\in V$ is
the maximum payoff player Max can ensure irrespective of player Min's strategy. 
\begin{eqnarray*}
  \UVAL_\C(\Gg, v){=}\inf_{\sigma \in \Sigma_\Min} \sup_{\chi\in \Sigma_\Max} \C(w(\play^{\sigma,\chi}(v))) \\
  \LVAL_\C(\Gg, v){=}\sup_{\chi \in \Sigma_\Max} \inf_{\sigma \in \Sigma_\Min} \C(w(\play^{\sigma,\chi}(v))).
\end{eqnarray*}
The inequality $\LVAL_\C(\Gg, v) \leq \UVAL_\C(\Gg,v)$ holds for all two-player
 zero-sum games.
 A game is \emph{determined} when, for every vertex $v \in V$, the lower value and upper value are equal.
In this case, we say that the value of the game $\VAL_\C$ exists with
 $\VAL_\C(\Gg,v) = \LVAL_\C(\Gg,v) = \UVAL_\C(\Gg,v)$ for every $v\in V$.
 For strategies $\sigma \in \Sigma_\Min$ and $\chi \in \Sigma_\Max$ of players
 Min and Max,
 we define their values $\VAL^\sigma$ and $\VAL^\chi$ as
\begin{eqnarray*}
   \VAL^\sigma_\C \colon  v &\mapsto&\sup_{\chi \in \Sigma_\Max}
   \C(w(\play^{\sigma,\chi}(v))) \text{ and }\\
   \VAL^\chi_\C \colon  v &\mapsto& \inf_{\sigma \in \Sigma_\Min} \C(w(\play^{\sigma,\chi}(v))).
\end{eqnarray*}
A strategy $\sigma_*$ of player Min is called \emph{optimal} if $\VAL_\C^{\sigma_*} = \VAL_\C$.
Likewise, a strategy $\chi_*$ of player Max is optimal if $\VAL_\C^{\chi_*} = \VAL_\C$.
We say that a game is \emph{positionally determined} if both players have positional optimal strategies.

\begin{theorem}[\cite{mean_payoff_zwick,Survey_Games_Lim}] \label{thm:positional}
For $\C \in \{\mathsf{DSum_{\lambda}}, \mathsf{Mean}, \mathsf{Sup}, \mathsf{LimSup}\}$,
$\C$-games are determined in positional strategies.
The complexity of solving is in 
$\textsf{NP} \cap \textsf{co-NP}$ for $\mathsf{DSum_{\lambda}}$-games and 
$\mathsf{Mean}$-games, and, is in $P$ for $\mathsf{Sup}$-games and $\mathsf{LimSup}$-games.
\end{theorem}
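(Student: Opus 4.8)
The plan is to split the four aggregators into two regimes matching the claimed complexity. For the arithmetic aggregators $\mathsf{DSum_{\lambda}}$ and $\mathsf{Mean}$ I would establish positional determinacy through optimality (Bellman-type) equations and then derive $\textsf{NP} \cap \textsf{co-NP}$ membership from the observation that, once one player commits to a positional strategy, the residual one-player optimization is solvable in polynomial time. For the two ``topological'' aggregators $\mathsf{Sup}$ and $\mathsf{LimSup}$ I would exploit that $w$ takes values in a finite set, reducing each $\C$-game by thresholding to a qualitative safety/reachability or B\"uchi/co-B\"uchi game, all of which are positionally determined and solvable in $P$.

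For $\mathsf{DSum_{\lambda}}$, I would first argue that the value vector is the unique fixed point of the one-step operator that maps a vector $u \colon V \to \Qq$ to the vector whose $v$-entry is $\min$ (for $v \in V_\Min$) or $\max$ (for $v \in V_\Max$) over outgoing edges $(v,v')$ of $w(v,v') + \lambda \cdot u(v')$. Since $\lambda < 1$, this operator is a $\lambda$-contraction in the sup-norm, so Banach's fixed-point theorem yields a unique solution, which I would identify with $\VAL_{\mathsf{DSum_{\lambda}}}$; reading off an edge attaining the $\min$/$\max$ at each vertex gives positional strategies, and a one-shot deviation argument shows they are optimal even against history-dependent strategies, establishing determinacy. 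For complexity, to decide $\VAL \bowtie c$ I would guess a positional strategy of the appropriate player; fixing it collapses the arena to a one-player discounted problem whose value solves a linear system (equivalently a shortest-path-type computation), computable in $P$. Positional determinacy guarantees a witness for both the $\geq$ and the $<$ directions, giving $\textsf{NP} \cap \textsf{co-NP}$.

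For $\mathsf{Mean}$, positional determinacy is the delicate point: I would obtain it either via an Ehrenfeucht--Mycielski-style exchange argument showing an optimal strategy can be made memoryless by rerouting cycles to their best mean value, or via discounted approximation, showing that for $\lambda$ sufficiently close to $1$ an optimal positional strategy of the $\mathsf{DSum_{\lambda}}$-game is simultaneously optimal for the mean-payoff game. The same guess-and-verify scheme then yields $\textsf{NP} \cap \textsf{co-NP}$: after fixing a positional strategy of one player, the best response amounts to computing a minimum (resp.\ maximum) mean cycle reachable in the induced graph, which Karp's algorithm does in polynomial time. I expect this positional-determinacy step for mean-payoff, together with the clean reduction of the fixed-strategy value to a polynomial-time mean-cycle computation, to be the main technical obstacle.

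Finally, for $\mathsf{Sup}$ and $\mathsf{LimSup}$ I would enumerate the finitely many distinct weights $t$ appearing on edges. For $\mathsf{Sup}$, player $\Max$ can force $\VAL \geq t$ exactly when she can reach some edge of weight $\geq t$, a reachability game, while $\Min$ keeps the value below $t$ by a dual safety strategy; for $\mathsf{LimSup}$, the threshold condition becomes whether $\Max$ can visit edges of weight $\geq t$ infinitely often, a B\"uchi game, against $\Min$'s co-B\"uchi objective. Since reachability, safety, B\"uchi and co-B\"uchi games are positionally determined and solvable in $P$, iterating over the polynomially many thresholds (or binary-searching) both establishes positional determinacy and computes the value in polynomial time, completing the $P$ claim for these two aggregators.
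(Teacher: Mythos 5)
Your proposal is correct, but there is nothing in the paper to compare it against: Theorem~\ref{thm:positional} is stated as a known background result, with the proof delegated entirely to the cited literature (Zwick--Paterson for discounted and mean-payoff games, and the survey on games with limit objectives for $\mathsf{Sup}$ and $\mathsf{LimSup}$). What you have written is essentially a faithful reconstruction of those classical arguments: the Banach-contraction and one-shot-deviation treatment of $\mathsf{DSum_{\lambda}}$; positional determinacy of mean-payoff games via an Ehrenfeucht--Mycielski exchange argument or via the discounted limit $\lambda \to 1$; the guess-a-positional-strategy-and-verify scheme (with linear programming, respectively Karp's minimum mean cycle algorithm, as the polynomial-time verifier) yielding $\textsf{NP} \cap \textsf{co-NP}$, where positional determinacy guarantees witnesses on both sides; and the threshold reductions of $\mathsf{Sup}$ and $\mathsf{LimSup}$ to reachability/safety and \buchi/co-\buchi games for the $P$ bounds, which also transfer positional determinacy back to the quantitative game. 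All of these steps are sound. The one loose phrase is your claim that fixing one player's positional strategy in the discounted game leaves a value that ``solves a linear system'': the residual problem is a one-player optimization, i.e.\ a Bellman equation still containing a $\min$ or $\max$, which is solved by linear programming rather than by a plain linear system --- this does not affect the polynomial-time verification claim. So your route is correct and is, in substance, the standard one taken by the references the paper cites in place of a proof.
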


The goal of the player Min in a \buchi game~\cite{chatterjee_buchi_game_alg} 
over a game arena $\Gg$ and a set $F \subseteq V$ is to choose her actions 
in  such a way that some vertex $v_f \in F$ occurs infinitely often in the play, 
while the goal of the Max player is to prevent this. 
We note from~\cite{Survey_Games_Lim} that $\mathsf{LimSup}$-games 
generalize \buchi games.
For Theorem~\ref{thm:positional} it follows that the winning region, i.e. the set 
of vertices where the player Min has a strategy to win can be computed in P.



\section{Problem Definition}
\label{sec:definition}
Just as weighted transducers extend finite state automata with outputs and costs on transitions,  NBAs can be extended to \emph{weighted non-deterministic \buchi transducers} by  adding an output word and costs to transitions.
We define a repair machine as a weighted non-deterministic B\"uchi transducer equipped with a cost aggregation.
We introduce repair machines and their computational problems.

\begin{definition}
A \emph{repair machine} (RM) $T$ is a tuple  $(Q,\Sigma, Q_0,Q_f,\Gamma, \delta, \C)$ where 
 $Q$ is a finite set of states, 
     $Q_0 \subseteq Q$ is the set of initial states, 
 $Q_f \subseteq Q$ is the set of final states,
 $\Gamma$ is the output alphabet,
  $\delta \subseteq Q \times \Sigma \times Q \times {\Gamma^{*}} \times
{\mathbb{N}}$ is the transition relation, and 
 $\C$ is the cost aggregator function.
         
\end{definition}
For a given aggregator function  $\C \in \{\mathsf{DSum_{\lambda}}, \mathsf{Mean}, \mathsf{Sup}, \mathsf{LimSup}\}$, we refer to a repair machine as $\dsum$-RM, $\mean$-RM, $\supf$-RM, $\limsupf$-RM.

	A transition $(q, a, q', w, c) \in \delta$ indicates that, the transducer on reading
the letter $a \in \Sigma$ in state $q$, transitions to state $q'$, and outputs
a word $ w \in \Gamma^*$, incurring a cost $c$ for rewriting $a$ to $w$. 
We write $q \xrightarrow{a/w}_{c} q'$ if $(q, a, q', w, c) \in \delta$.
A run $\rho$ of $T$ on $u=a_1a_2 \dots \in \Sigma^{\omega}$ is a sequence 
$\seq{q_0, (a_0, w_0, c_0), q_1, (a_1, w_1, c_1), \ldots}$
where for every $i \geq 0$ we have that $q_0 \in Q_0$ and $q_i \xrightarrow{a_i/w_i}_{c_i} q_{i+1}$.
Let $\Runs(T, u)$ be the set of runs of $T$ on $u$.
We write $\Oo(\rho)$ and $\Cc(\rho)$ for the projection on the outputs and cost sequences, i.e. $\Oo(\rho) = w_0w_1\ldots$ and $\Cc(\rho) = c_0c_1\ldots$, of a run $\rho$ of $T$.
We say that a run of $T$ is accepting if states from $Q_f$ are visited infinitely often.
We write $\dom(T)$ for the set of all words which have an
accepting run.

We define three different semantics for $T$. The function $\sem{T}(u)$ returns the set of all pairs of outputs and cost sequences over the word $u \in \Sigma^\omega$; the function $\sem{T}^\C_*(u, v)$ returns the optimal rewriting cost w.r.t the aggregator function $\C$ over $T$ for a rewriting of $u$ to $v$; and $\sem{T}^\C_\tau(u)$ returns the set of all rewritings of a word $u$ with cost bounded by a threshold $\tau \in \Real$.
\begin{eqnarray*}
\sem{T}(u) &=& \set{ 
(\Oo(\rho), \Cc(\rho)) \::\: u \in \dom(T) \text{ and  } \rho \in \Runs(T, u) },\\
 \sem{T}^\C_*(u, v)  &=& \inf \set{ \C(\Cc(\rho)) \::\: \rho \in \Runs(T, u) \text{ and } \Oo(\rho) = v}, \\
\sem{T}^\C_\tau(u)  &=& \set{ \Oo(\rho) \::\: \rho \in \Runs(T, u) \text{ and } \sem{T}^\C_*(u, \Oo(\rho)) \leq \tau) }.
\end{eqnarray*}
An example of a RM and aggregation functions is shown in Appendix~\ref{ap:eg_rm}.

\vspace{1em}
\noindent{\bf Problems of Optimal Repair}. Given the Kripke structure $K$ representing the system, the $\omega$-regular specification specified by the language $L\subseteq\Gamma^{\omega}$, a RM $T$, a cost semantics $\C \in \{\mathsf{DSum_{\lambda}}, \mathsf{Mean}, \mathsf{Sup}, \mathsf{LimSup}\}$, and a threshold $\tau \in \Qq_{\geq 0}$, the \emph{repair synthesis} problem asks if there exists a strategy of rewriting every trace $t \in \Tt_K$  to some word $w \in L$ using $T$ such that cost is at most $\tau$.

We restrict the repair policies where Player Min is restricted 
to rewrite a letter of the trace based on history and not to rely on a lookahead. 
We  give a game semantics to the repair synthesis problem as a turn-based two player game between players Min and Max that proceeds as follows. 
The game begins with player Max selecting the initial state $s_0 \in S_0$ of the Kripke structure and ends her turn.
Player Min, starts from the initial state $q_0$ of the RM and then selects a valid rewriting $w_i$ of $\Ll(s_0)$  such that $(q_0,\Ll(s_0),q'_i,w_i,c) \in \delta$ is a valid transition for some $c \in \Nat$ and changes the state of the RM to $q'_i$, she then ends her turn.
The game continues in this fashion, where player Max selects the next state $s'_i$ of the Kripke structure and Player Min selects a valid rewriting and thus the next state of the repair machine.
This turn based game proceeds indefinitely and results in Player Max selecting a trace $t \in \Tt_{K}$ and player Min selecting a word $w \in \mathbb{N}^{\omega}$.
Player Min wins the game if $w \in \sem{T}^{\C}_{\tau}(t)$, and $w \in L$, otherwise  player Max wins the game.
The existence of a winning strategy for Player Min implies the existence of a repair strategy.

    \begin{definition}[Repair Synthesis]
    Given a Kripke structure $K$ representing the system, an $\omega$-regular specification $L$, a repair machine $T$, a cost semantics $\C \in \{\mathsf{DSum_{\lambda}}, \mathsf{Mean}, \mathsf{Sup}, \mathsf{LimSup}\}$, and a threshold $\tau$ decide whether there exists a strategy to rewrite every trace $t \in \Tt_K$ to some word $w \in L$ with a cost of at most $\tau$, and if so synthesise this strategy.
    \end{definition}
      
    We also consider the dual challenge of {\it impair verification} where the system is subjected to adversarial rewritings. 
    This setting has applications in, among others, availability vulnerability detection.
    We consider an attack model where the rewritings given by the repair machine are resolved adversarially but are restricted to be within a given cost. 
    The verification problem is to decide if there exists traces of the system that satisfy an $\omega$-regular property capturing the undesirable behaviors for some such rewritings.
    The game semantics for the impair verification problem are similar to that of repair synthesis, however in the case of impair verification the player Max not only controls the selection of the next state $s'_i$, but also decides the rewriting by selecting the word $w'_i$ as well. 

  \begin{definition}[Impair Verification]
    Given a structure $K$ representing the system,  an $\omega$-regular language $L$ capturing the  undesirable behavior given as an NBA $A$, repair machine $T$, a cost semantics $\C \in \{\mathsf{DSum_{\lambda}}, \mathsf{Mean}, \mathsf{Sup}, \mathsf{LimSup}\}$, and a threshold $\tau \in \Qq_{\geq 0}$, the \emph{impair verification} problem fails if there exists a trace $t \in \Tt_K$ that can be rewritten to some word $w \in L$ with a cost of at most $\tau$ under an adversarial strategy.
    \end{definition}
  
  When one may not be able to pass the impair verification problem, it may be desirable to design a way to minimally mask the Kripke structure such that the resulting system satisfies the specifications despite the threshold-bounded impairment. In such a case, we wish to find the maximal subset $N'$ of traces which, even under adversarial rewrites,  satisfy the $\omega$-regular specification $L$.
 
\begin{definition}[Mask Synthesis]
Given a Kripke structure $K$ representing the system, an $\omega$-regular language $L$ capturing the  undesirable behavior given as an NBA $A$, repair machine $T$, a cost semantics $\C \in \{\mathsf{DSum_{\lambda}}, \mathsf{Mean}, \mathsf{Sup}, \mathsf{LimSup}\}$, and $\tau \in \Qq_{\geq 0}$, the problem of \emph{mask synthesis} is to find a maximal subset $N' \subseteq \Tt_K$ such that all traces $t \in N'$  pass the impair verification.
\end{definition}

The next three sections present our results on these three problems.

\section{Repair Synthesis} 
\label{sec:synthesis}
To solve the problem of repair synthesis, we reduce it to a related problem of \emph{threshold synthesis}. Threshold synthesis asks for a partition of the rational numbers $\Qq_{\geq 0}$ into sets $\mathbb{G}$ (good) and $\mathbb{B}$ (bad) sets such that the repair synthesis problem can be solved for all good thresholds $\tau \in \mathbb{G}$. 
Given a system $K$, the specification  $L \subseteq \Gamma^{\omega}$ represented by an NBA $B$, a repair machine $T$, and a cost semantics $\C \in \{\mathsf{DSum_{\lambda}}, \mathsf{Mean}, \mathsf{Sup}, \mathsf{LimSup}\}$,
we focus on the threshold synthesis problem: find a partition of $\Qq_{\geq 0}$  into two sets $\mathbb{G}$ and $\mathbb{B}$ such that the policy synthesis can be solved for all $\tau \in \mathbb{G}$.
We note that in the case of policy synthesis, the sets $\mathbb{G}$ and $\mathbb{B}$ are upward and downward closed respectively. 
If player Min has a winning strategy for some $\tau\in \Qq_{\geq 0}$ then she may use the same strategy for all $\tau' \geq \tau$.
Let the infimum value $\tau$ for which player Min wins be denoted as $\tau^{*}$, then $\mathbb{G} = [\tau^{*}, \infty)$ and  $\mathbb{B} = [0, \tau^{*})$. 
We call this value $\tau^*$ the optimal threshold.

\vspace{-0.7em}
\subsection{Solving the B\"uchi Games}
Our approach to compute the optimal threshold is to first restrict the choice of player Min to those where she has a strategy to win with respect to the B\"uchi objective, irrespective of the choices of Player Max on the Kripke structure.
If Player Min has no valid strategy to rewrite a trace of the system to satisfy the \buchi objective, then the optimal threshold $\tau^{*} = \infty$.
We thus consider the case when $\tau^{*} \neq \infty$ by playing a \buchi game on a game arena and then pruning it.

To construct the game arena, we first construct the synchronized product $K{\times}T{\times}B$ of $K$, $T$, and $B$. Intuitively, $K{\times}T{\times}B$ accepts those traces of the system, which have some rewriting that is in $L$.
\begin{definition}
\label{def:prod}
The synchronized product 
$K{\times}T{\times}B$ of the Kripke Structure $K = (S,\hookrightarrow, S_0, \Ll) $, the repair machine $T = (Q,\Sigma,Q_0,Q_f,\Gamma,\Delta,C)$ and the NBA $B =
(P,\Gamma,P_0,P_f,\delta)$ is a weighted (directed) graph $G^\times = (V, E, W, V_I, V_F)$, where:  
\begin{itemize}
	\item 
	$V = S \times Q \times P \times \set{1,2}$ is the set of vertices consisting of states of the system $K$, repair machine $T$, and NBA $B$, and a counter that tracks the visitation of accepting states of $T$ and $B$ (like the degeneralization construction for the generalized B\"uchi automata)
	\item 
	$E \subseteq V \times V$ is such that $((s, q, p, i), (s', q', p', i')) \in E$ if $(s, s') \in \hookrightarrow$ is a transition in $K$, for some $w \in \Gamma^{*}$ and $c \in \Nat$ transition $(q, \Ll(s), q', w, c) \in \Delta$ is in $T$, and $(p, w, p') \in \widehat{\delta}$ is a transition in $B$, and one of the following holds:
	\begin{itemize}
	\item $i = i' = 1$ and $q' \notin Q_f$
	\item $i = i' = 2 $ and $p \notin P_f$
	\item $i = 1$ and $i' = 2$ and $q' \in Q_f$
	\item $i = 2$ and $i' = 1$ and $p \in P_f$
	\end{itemize}
\item $W: E \to \Nat$ is the weight function such that
\vspace{-0.3em}
\[ 
W((q,s,p,i), (q', s', p', i')) =  \min \set{c \::\: (q, \Ll(s), q', w, c) \in \Delta};
\]
\item $V_I \subseteq V = Q_0 \times S_0 \times P_0 \times \{ 1 \} $ is the set of initial vertices;  and 
\item $V_F \subseteq V = Q \times S \times P_f \times \set{2}$ is the set of final vertices.
\end{itemize}
\end{definition}

To distinguish the choice of player Max and Min, we define a game structure $\Gg^\times$ on the product graph $G^\times$ by introducing intermediate states by appending another layer to the track counter. 
The formal construction is shown next.

The game graph  $\Gg^\times = ((\overline{V}, \overline{E},\overline{W}, V_I, V_F), \overline{V}_{\Min}, \overline{V}_{\Max})$ for product $G^\times = (V, E, W, V_I, V_F)$ is such that:
\begin{itemize}
    \item $\overline{V} = S \times Q \times P \times \set{1,2, 3}$;
    \item $\overline{E}$ is such that for $e = ((s, q, p, i)(s',q',p',i')) \in E$ we have two edges to separate the choice of the RM and the NBA from the Kripke structure:
    \begin{itemize} 
    \item 
    $e_1 = ((s, q, p, i), (s,q',p',3)) \in \overline{E}$ and 
    \item 
    $e_2 = ((s, q', p', 3)(s',q',p',i')) \in \overline{E}$;
    \end{itemize}
    with the weights $\overline{W}(e_1) = W(e)$ and  $\overline{W}(e_2) = 0$;
    \item $\overline{V}_{\Min} = S \times Q \times P \times \set{1,2}$; and 
    \item $\overline{V}_{\Max} = S \times Q \times P \times \set{3}$.
\end{itemize}
Note that the first choice is made by player Max in choosing the starting state of the Kripke structure, and in the subsequent transitions player Min reads those states and makes a choice over the rewrites. For this reason, the choice of player Max appear to be lagging by one.

We play the \buchi-game on $\Gg^\times$ with the set of accepting states as $V_F$. 
We then prune the arena to contain only those states that are in the winning region of player Min with respect to the \buchi objective, that is, the set of states where player Min has a strategy to enforce visiting \buchi states irrespective of the strategy chosen by the player Max.
We denote this pruned game arena as $\Gg$.

\subsection{Optimal Threshold for $\dsum$-RM} 
We reduce the problem of finding the optimal threshold $\tau^{*}$ for a $\dsum$-RM to the problem of finding the value of a $\dsum$-game on the game arena $\Gg$. As such we reduce the choices of selecting a trace by player Max and that of selecting a rewriting by player Min in the context of repair synthesis to choices made by the players in a $\dsum$-game over an arena $\Gg$. 
In particular, we have the following.
\begin{theorem}
\label{thm:dsum-synth}
The optimal threshold $\tau^{*}$ for the $\dsum$-RM can be computed in NP $ \cap$ co-NP via solving a $\dsum$-game on $\Gg$.
\end{theorem}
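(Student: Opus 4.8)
The plan is to establish a value-preserving correspondence between the repair synthesis game described in the problem definition and the $\dsum$-game played on the pruned arena $\Gg$, and then invoke Theorem~\ref{thm:positional} to obtain the NP $\cap$ co-NP bound. First I would argue that the \buchi-game pruning step is sound: after restricting $\Gg^\times$ to the winning region of player Min for the \buchi objective $V_F$, every play that stays in $\Gg$ necessarily visits $V_F$ infinitely often (regardless of Max's choices), so by the degeneralization construction encoded in the counter $\set{1,2}$ both the \buchi condition of $T$ (visiting $Q_f$) and that of $B$ (visiting $P_f$) are met infinitely often. Consequently, any play in $\Gg$ corresponds to a trace $t \in \Tt_K$ together with a rewriting $w$ that is accepted by $T$ and lies in $L = L(B)$. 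If the winning region is empty from the initial vertices, then player Min cannot enforce the \buchi objective on even one trace, so $\tau^{*} = \infty$, matching the earlier remark.

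Next I would show that the edge weights $\overline{W}$ faithfully track the rewriting cost. By Definition~\ref{def:prod}, each $K$-transition edge carries the \emph{minimum} cost $\min\set{c : (q,\Ll(s),q',w,c)\in\Delta}$ among transitions producing a fixed output fragment, and the intermediate Max-vertices (counter value $3$) carry weight $0$. Since $\dsum_\lambda$ discounts future costs but the zero-weight intermediate edges merely interleave, I would verify that the discounted aggregate of a play in $\Gg$ equals (up to the fixed reindexing induced by the extra layer) $\dsum_\lambda(\Cc(\rho))$ for the corresponding run $\rho$ of $T$. The key claim is the two-directional simulation: for every strategy of Max selecting a trace $t$, player Min's positional strategy in $\Gg$ yields a rewriting $w \in \sem{T}^{\dsum_\lambda}_\tau(t) \cap L$ iff $\VAL_{\dsum_\lambda}(\Gg, v_0) \leq \tau$, so the game value coincides with the optimal threshold $\tau^{*} = \inf\set{\tau : \text{Min wins}}$.

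The main obstacle I anticipate is the interaction between the lookahead restriction and the choice of output word $w \in \Gamma^*$ that may have length greater than one. The problem statement restricts Min to rewrite letter-by-letter based on history without lookahead, but a single input letter $a$ can be rewritten to a string $w$, and the NBA $B$ consumes $w$ via the extended transition relation $\widehat{\delta}$; I would need to argue that folding these multi-letter outputs into single product edges (as Definition~\ref{def:prod} does by requiring $(p,w,p')\in\widehat{\delta}$) does not lose any Min strategies nor introduce spurious ones, and that the cost $c$ attached to the whole rewrite $a \mapsto w$ is the correct quantity to discount. A subtle point is that taking the pointwise minimum cost in $W$ is justified precisely because, for a fixed target output fragment driving $B$ from $p$ to $p'$, Min prefers the cheapest witnessing transition; I would make this explicit. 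Once the simulation is set up, positional determinacy of $\dsum$-games (Theorem~\ref{thm:positional}) transfers a positional optimal Min strategy in $\Gg$ back to a valid history-based (indeed positional) repair strategy, and the NP $\cap$ co-NP complexity of computing $\VAL_{\dsum_\lambda}(\Gg, v_0)$ gives the stated bound.
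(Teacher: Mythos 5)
Your proposal contains a genuine gap, and it sits exactly where the paper's proof does its real work. You claim that after pruning $\Gg^\times$ to player Min's \buchi winning region, ``every play that stays in $\Gg$ necessarily visits $V_F$ infinitely often (regardless of Max's choices).'' This is false: membership in the winning region guarantees only that Min \emph{has} a strategy to enforce the \buchi objective, not that every play inside the region satisfies it. In particular, the positional discount-optimal strategy for the $\dsum$-game on $\Gg$ may keep the play inside the winning region forever while never visiting $V_F$ (the paper illustrates precisely this tension with the three-vertex example in Section~5: looping on $v_1$ is discount-optimal but violates the \buchi condition). Because of this, the ``iff'' in your key claim breaks in one direction: from $\VAL_{\dsum}(\Gg, v_0) \leq \tau$ you cannot conclude that Min's optimal strategy yields a rewriting lying in $L$. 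The paper closes this gap with a strategy-switching argument that your proposal never supplies: $\tau^{*}$ is only approached in the limit, and for each $\varepsilon > 0$ Min plays the discount-optimal strategy for $k$ steps (with $k$ chosen so that $\lambda^{k} W/(1-\lambda) \leq \varepsilon$, $W$ the maximum weight), then switches to the \buchi-game winning strategy---available at the current vertex precisely because the play has stayed in the pruned region---paying at most $\varepsilon$ extra. Without this argument you establish only that the game value is a lower bound on $\tau^{*}$, not that it equals $\tau^{*}$.

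A second, smaller gap: you wave at the cost correspondence with ``up to the fixed reindexing induced by the extra layer,'' but no reindexing makes it work at discount factor $\lambda$. Since each edge of $G^\times$ is split into a Min edge (carrying the weight) and a zero-weight Max edge, playing the $\dsum_{\lambda}$-game on $\Gg$ computes $\overline{W}(e_0) + \lambda^2 \overline{W}(e_1) + \lambda^4 \overline{W}(e_2) + \cdots$, i.e., the $\dsum_{\lambda^2}$ value of the original cost sequence. The paper's fix is to solve the game on $\Gg$ with discount factor $\gamma = \sqrt{\lambda}$, so that consecutive weighted edges are discounted by $\gamma^2 = \lambda$ and the game value matches $\dsum_{\lambda}(\Cc(\rho))$ exactly. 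Your remaining points---the $\tau^{*} = \infty$ case, the minimum-cost justification for $W$, the folding of multi-letter outputs through $\widehat{\delta}$, and the appeal to Theorem~\ref{thm:positional} for the NP $\cap$ co-NP bound---are sound and consistent with the paper, but the two items above must be repaired before the argument goes through.
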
 
\begin{proof}[Sketch]
We solve the $\dsum_{\gamma}$ game on $\Gg$ with $\gamma = \sqrt{\lambda}$, the value of this game corresponds to the optimal threshold $\tau^{*}$, as each edge of the synchronized product is captured by a pair of edges in $\Gg$.
For any $\varepsilon > 0$, Player Min has a strategy of following this $\dsum$ strategy, and then following the strategy of the \buchi-game such that the cost of this rewriting is $\tau^{*} + \varepsilon$.
\end{proof}

\subsection{Optimal Threshold for $\mean$-RM}
Similar to the case of the $\dsum$-RM, in the case of the $\mean$-RM, we reduce the problem of finding the optimal threshold $\tau^{*}$ to the problem of finding the value of a $\mean$-game on a game arena $\Gg$. However we note that unlike the case of the $\dsum$-RMs we also need to ensure that the mean cost cycle is co-accessible from the accepting vertices.
In particular we have the following result.
\begin{theorem}
\label{thm:mean-synth}
The optimal threshold $\tau^{*}$ for the $\mean$-RM can be computed in NP $\cap$ co-NP via solving a $\mean$ game on $\Gg$.
\end{theorem}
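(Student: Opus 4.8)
The plan is to show that the optimal threshold $\tau^{*}$ equals the value of a plain $\mean$-game on the pruned arena $\Gg$ (up to the rescaling discussed below), and then invoke Theorem~\ref{thm:positional} for the complexity. First I would dispose of the case $\tau^{*}=\infty$: if some initial vertex fails to lie in the B\"uchi winning region of player Min on $\Gg^{\times}$, then Min cannot even guarantee the B\"uchi condition and $\tau^{*}=\infty$; otherwise every vertex of $V_I$ survives the pruning, and on $\Gg$ player Min can always force infinitely many visits to $V_F$ while keeping the play inside $\Gg$ (the winning region is a trap for Max). On this sub-arena I would compute the positional $\mean$-value $\nu \rmdef \max_{v\in V_I}\VAL_{\mean}(\Gg,v)$, which by Theorem~\ref{thm:positional} lies in $\textsf{NP}\cap\textsf{co-NP}$. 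Since each original edge of $G^{\times}$ is split into a weighted edge followed by a zero-weight edge, exactly half of the edges along any play carry the cost, so the $\mean$-value on $\Gg$ is half of the asymptotic rewriting cost; accordingly I would claim $\tau^{*}=2\nu$, the $\mean$-counterpart of the $\gamma=\sqrt{\lambda}$ rescaling used for $\dsum$.

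The lower bound $\tau^{*}\geq 2\nu$ is the easy direction. Let $\chi_{*}\in\Pi_\Max$ be an optimal positional $\mean$-strategy for player Max on $\Gg$. Against $\chi_{*}$ every play---in particular every play generated by a strategy of Min that satisfies the B\"uchi objective---has split-average at least $\nu$, hence rewriting cost at least $2\nu$. Thus no Min strategy can simultaneously realize a word in $L$ and keep the $\mean$-cost below $2\nu$, so every $\tau<2\nu$ is bad.

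The upper bound $\tau^{*}\leq 2\nu$ is where the extra work over $\dsum$ appears, and it is the main obstacle. Let $\sigma_{mp}\in\Pi_\Min$ be an optimal positional $\mean$-strategy and let $\sigma_{B}$ be a positional B\"uchi-winning strategy on $\Gg$; a positional $\sigma_{B}$ forces a visit to $V_F$ within at most $|\overline{V}|$ steps from any vertex, so a single detour costs at most $B \rmdef |\overline{V}|\cdot\max_{e}\overline{W}(e)$. I would define a history-dependent Min strategy that follows $\sigma_{mp}$ for longer and longer stretches, interrupting it only to run $\sigma_{B}$ until an accepting vertex is reached and then resuming $\sigma_{mp}$. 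Two facts make this work. Since $\Gg$ is the B\"uchi winning region, from the cheap cycle that $\sigma_{mp}$ settles into an accepting vertex is always reachable and the low-mean region is re-entered afterwards---this is exactly the co-accessibility of the $\mean$-cycle from $V_F$ that is vacuous for $\dsum$---so every scheduled detour is available and of cost at most $B$. Scheduling the detours sparsely (say the $k$-th detour after $2^{k}$ further steps of $\sigma_{mp}$) makes both the number of detours and their total cost up to time $n$ equal to $o(n)$. Because $\sigma_{mp}$ caps the average of every cycle reachable under it by $\nu$, each $\sigma_{mp}$-stretch of length $m$ contributes at most $\nu m + B$, so the total weight over the first $n$ edges is at most $\nu n + o(n)$ against any strategy of Max. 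Hence the $\limsup$ average is $\leq\nu$ while $V_F$ is hit infinitely often, Min wins at $\tau=2\nu$, and $\mathbb{G}=[2\nu,\infty)$ with $\tau^{*}=2\nu$.

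I anticipate the delicate point to be precisely this finitary averaging in the upper bound: one must argue that interleaving the two positional strategies neither breaks the B\"uchi condition (handled by the sparse but infinitely-often detours) nor inflates the asymptotic mean (handled by prefix-independence of $\mean$ together with the uniform cost bound $B$ on detours and the cycle-average bound of $\sigma_{mp}$). The co-accessibility guarantee, which is automatic once we restrict to the B\"uchi winning region $\Gg$, is what licenses the detours and is the sole structural ingredient absent from the $\dsum$ argument.
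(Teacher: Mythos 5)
Your reduction of $\tau^{*}$ to the plain $\mean$-game value $\nu$ on $\Gg$ is exactly where the argument breaks, and the step you dismiss as automatic is the crux of the theorem. You assert that co-accessibility of the cheap $\mean$-cycle from the accepting vertices ``is automatic once we restrict to the B\"uchi winning region $\Gg$.'' It is not. Membership of every vertex of $\Gg$ in Min's B\"uchi winning region guarantees that from the cheap cycle Min can always \emph{reach} $V_F$ (and keep satisfying the B\"uchi condition thereafter); it does not guarantee that after such a detour the play can ever \emph{return} to the cheap cycle. Concretely, consider an arena (realizable as a product $K\times T\times B$, e.g.\ with a one-state Kripke structure so that Max has no real choices) with a Min vertex $a$ carrying a cost-$0$ self-loop, a cost-$0$ edge $a\to f$ into an accepting vertex $f$, and only a cost-$10$ self-loop at $f$, with no edge back to $a$. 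Every vertex is B\"uchi-winning for Min, so nothing is pruned, and $\nu=\VAL_{\mean}(\Gg,a)=0$ (loop at $a$ forever); yet every play satisfying the B\"uchi condition has a suffix looping at $f$ and hence mean cost $10$, so the true threshold corresponds to mean $10$, not $0$. Your lower bound $\tau^{*}\geq 2\nu$ is fine, but the upper bound fails: after its first detour your interleaved strategy resumes $\sigma_{mp}$ from $f$ and is stuck there, and the estimate ``each $\sigma_{mp}$-stretch of length $m$ contributes at most $\nu m+B$'' is false, because $\sigma_{mp}$ only guarantees value $\VAL_{\mean}(\Gg,v')$ from the vertex $v'$ where it is resumed, which after a detour can exceed $\nu$.

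This is precisely why the paper's proof does not simply solve the $\mean$-game on $\Gg$: it computes, for each accepting vertex, the mean value $M_1$ of its B\"uchi cycle, explicitly checks whether a vertex of that B\"uchi cycle is co-accessible with respect to the $\mean$-game strategy, and only uses the $\mean$-game cycle value $M_2$ when this check succeeds, falling back to $M_1$ otherwise; the exponentially-sparse alternating round strategy (which you also use) is then run between two cycles that are certified to be mutually reachable. So your analysis coincides with the paper's in the co-accessible case, but the case where co-accessibility fails---flagged in the paper's main text as exactly the difference between $\mean$ and $\dsum$---is missing, and with it your claimed identity $\tau^{*}=2\nu$ is simply false.
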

\begin{proof}[Sketch]
    The proof of this theorem is similar to that of Theorem~\ref{thm:dsum-synth}.
    Here, we first find a least cost mean cycle that is co-accessible by Player Min from the winning strategy of the \buchi-game on $\Gg$ (either a cycle following some $\mean$-game or the \buchi cycle itself).
    To do so we determine vertex  that is co-accessible along the $\mean$-game over $\Gg$ as well as the \buchi-game.
    Player Min then alternates between two strategies in rounds, the first, where she follows the strategy of the $\mean$-game and the second to where she follows the strategy of the \buchi-game.
    At any round $i$, she follows the strategy of the $\mean$-game until she cycles on the  co-accessible vertex $2^{i}$ many times and then follows the strategy of the \buchi-game once to return to this vertex.
    As the least cost-cycle has twice the number of edges of the synchronized product we divide the value of the $\mean$-game by two to determine the optimal threshold.
    We note that the above strategy relies on infinite memory, however Player Min can restrict the number of rounds for any $\varepsilon > 0$, and so she has a finite memory policy to guarantee repair for any threshold of $\tau^{*} + \varepsilon $.
\end{proof}

\subsection{Optimal Thresholds for $\supf$-RMs and $\limsupf$-RMs}
In the case of the $\supf$ aggregator function we first order the edges of $G^\times$ in the descending order of their weights and remove them in stages from the largest to the smallest. 
If, at any stage, the removal of edge $e$, leads to a failure of satisfying the \buchi condition, we infer that $e$ is necessary to satisfy the \buchi condition for some state in $G$.
We claim that the weight of the edge $e$ is $\tau^{*}$. 

Similar to the $\supf$ aggregator function, we start removing edges of $G^\times$ in the descending order of their weights only if they are present in an accepting cycle in the case of the$\limsupf$ aggregator function. Then, if at any stage, the removal of edge $e$, leads to a failure of satisfying the \buchi condition, we infer that the $\tau^{*}= W(e)$ and conclude that we can safely remove edges with a higher weight. 
\begin{theorem}
\label{thm:sup-synth}
    Computing optimal threshold $\tau^{*}$ for $\supf$ and $\limsupf$-RMs is in P.
\end{theorem}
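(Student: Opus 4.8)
The plan is to treat both cases as searches over the finitely many distinct edge weights of the pruned arena $\Gg$, exploiting the structural fact that in $\Gg$ every positively weighted edge is controlled by player \Min{} (these are exactly the rewriting edges $e_1$ with $\overline{W}(e_1)=W(e)$, whereas \Max's edges $e_2$ all carry weight $0$). After the \buchi-pruning of the previous subsection, \Min{} wins the \buchi game on $\Gg$ using all edges, so $\tau^\ast<\infty$ and, since $\overline{W}$ is $\Nat$-valued, $\tau^\ast$ must coincide with one of the edge weights. For a threshold $\tau$ write $\Gg_{\le\tau}$ for the sub-arena obtained by deleting every (\Min-)edge of weight strictly above $\tau$; passing from $\Gg_{\le\tau}$ to $\Gg_{\le\tau'}$ with $\tau'\ge\tau$ only restores \Min-edges and leaves \Max's options untouched, so it can never help \Max. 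This monotonicity is the engine behind the descending-removal procedure.

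For the $\supf$ aggregator the payoff of a play is the supremum of all weights it ever sees, so a single expensive edge is fatal. Because \Min{} alone chooses the weighted edges, I would prove the characterization: \Min{} can repair within threshold $\tau$ iff she wins the \buchi game on $\Gg_{\le\tau}$ (treating a \Min-vertex left with no successor as losing). The forward direction is immediate—a \buchi strategy confined to $\Gg_{\le\tau}$ never pays more than $\tau$—and the converse uses that \Max{} cannot force \Min{} onto a deleted edge. With monotonicity this gives $\tau^\ast=\min\{\,w : \Min \text{ wins the \buchi game on } \Gg_{\le w}\,\}$. The algorithm then deletes edges in nonincreasing weight order, re-solving the \buchi game after each stage, and outputs $W(e)$ for the first deleted edge $e$ whose removal turns a won \buchi game into a lost one. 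I would close this with two inequalities: just before deleting $e$ the current graph $H$ satisfies $H\subseteq\Gg_{\le W(e)}$ and \Min{} wins on $H$, hence (restoring \Min-edges) \Min{} wins on $\Gg_{\le W(e)}$, giving $\tau^\ast\le W(e)$; and $H\setminus\{e\}\supseteq\Gg_{\le w'}$ for the next smaller weight $w'$, so a win at any threshold below $W(e)$ would already make \Min{} win on $H\setminus\{e\}$, contradicting the observed failure and giving $\tau^\ast\ge W(e)$. Since there are at most $|\overline{E}|$ stages and each \buchi game is solvable in P (Theorem~\ref{thm:positional}), the whole procedure runs in polynomial time.

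For the $\limsupf$ aggregator the payoff is the largest weight seen \emph{infinitely often}, so expensive edges may be used any finite number of times for free and only the recurrent part of a play is charged. The characterization I would prove is that \Min{} repairs within threshold $\tau$ iff every initial vertex lies in $\mathsf{Attr}_{\Min}(W_\tau)$, where $W_\tau$ is \Min's \buchi winning region computed on the cheap sub-arena $\Gg_{\le\tau}$, while the attractor (reachability for \Min) is computed in the \emph{full} arena $\Gg$. Sufficiency is easy: \Min{} forces the play into $W_\tau$ in finitely many steps—possibly through expensive edges, but only finitely often, hence for free under $\limsupf$—and then plays her cheap \buchi strategy forever, keeping the limsup at most $\tau$. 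For necessity I would use that the combined requirement ``visit $V_F$ infinitely often and visit weight-$>\tau$ edges only finitely often'' is a one-pair Rabin condition, for which the Rabin player (here \Min) classically enjoys a \emph{positional} optimal strategy $\sigma$; fixing $\sigma$, every terminal strongly connected component of the resulting one-player graph must have all its cycles cheap and $V_F$-hitting—otherwise \Max{} could pump an expensive or $V_F$-avoiding cycle and refute $\sigma$—so every terminal SCC lies inside $W_\tau$, and since every play eventually settles in such a component, $\sigma$ in fact forces reaching $W_\tau$. This is precisely the ``delete only edges lying on accepting cycles'' step: an edge used infinitely often necessarily sits on a cycle through $V_F$, so removing expensive edges off accepting cycles when computing the \buchi region (while retaining them for reachability) is sound, whereas transient expensive edges are harmless. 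Given the characterization, I would binary-search $\tau$ over the edge weights, each test computing one \buchi winning region on $\Gg_{\le\tau}$ and one \Min-attractor in $\Gg$, both in P (Theorem~\ref{thm:positional}), so $\tau^\ast$ is found in polynomial time.

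I expect the main obstacle to be the necessity direction of the $\limsupf$ characterization—exactly the point that distinguishes it from $\supf$. One must show that any threshold-$\tau$ repair can be realized by a strategy that pays for expensive edges only during a finite, reachability-style prefix and is recurrently confined to $\Gg_{\le\tau}$; the subtlety is that an adversary might try to force \Min{} to revisit expensive edges indefinitely, and ruling this out is exactly what the positional-strategy / terminal-SCC analysis accomplishes. Getting the interaction between ``remove off accepting cycles'' (for the recurrent \buchi behaviour) and ``keep for reachability'' (for the transient approach) exactly right—so that the procedure returns the true $\tau^\ast$ rather than the larger $\supf$ threshold—is the crux of the correctness argument.
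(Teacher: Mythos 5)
Your $\supf$ half is correct and is essentially the paper's own proof: delete edges of the pruned arena in nonincreasing weight order, re-solve the \buchi game after each deletion, and return the weight of the first edge whose deletion makes player \Min{} lose; your two inequalities are a cleaner rendering of the contradiction argument in the paper's appendix, and your observation that all weighted edges belong to \Min{} is exactly why the converse direction works. The gap is in the $\limsupf$ half. Your characterization ``\Min{} repairs within $\tau$ iff every initial vertex lies in $\mathsf{Attr}_{\Min}(W_\tau)$'' is false, so the binary search built on it can return a value strictly larger than $\tau^{*}$. Counterexample (every vertex survives the \buchi pruning): \Min-vertices $a,b,c$ with $a,c\in V_F$, a \Max-vertex $m$ with weight-$0$ edges to $a$ and to $b$, a weight-$0$ edge $a\to m$, a weight-$5$ edge from $b$ towards $c$, and a weight-$0$ loop at $c$ (all \Min-edges routed through intermediate \Max-vertices as in the construction of $\Gg$); take $a$ as the initial vertex. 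Here $\tau^{*}=0$: \Min{} plays $a\mapsto m$, $b\mapsto c$, $c\mapsto$ loop; if \Max{} always returns to $a$ the play is the cheap accepting cycle through $m$ and $a$, and if \Max{} ever moves to $b$ the single weight-$5$ edge is transient, so in either case the limsup is $0$ and $V_F$ is visited infinitely often. But in $\Gg_{\le 0}$ the vertex $b$ is a dead end, so \Max{} wins the cheap \buchi game from $m$ by moving to $b$; hence $W_0$ consists of $c$ (and its intermediate) only, and $\mathsf{Attr}_{\Min}(W_0)$ acquires $b$ but never $m$ or $a$ (a \Max-vertex needs \emph{all} successors in the attractor, and $a$'s only successor is $m$). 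Your test therefore fails at $\tau=0$ and first succeeds at $\tau=5$.

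The broken step is the necessity argument: from ``all reachable cycles of the one-player graph induced by the positional strategy $\sigma$ are cheap and $V_F$-hitting'' you conclude that every play settles in a \emph{terminal} SCC and hence reaches $W_\tau$. The infinity set of a play is strongly connected but need not be a terminal SCC: above, \Max{} keeps the play forever inside the non-terminal SCC $\{m,a\}$, which is cheap and accepting yet disjoint from $W_0$ and from its attractor. In other words, \Min{} may have to be prepared to stay in a cheap accepting cycle indefinitely and use an expensive edge only \emph{if} \Max{} deviates---a behaviour that a single attractor pass over $W_\tau$ cannot express. The objective $\mathrm{Inf}(V_F)\wedge\mathrm{Fin}(\text{weight}>\tau)$ is a one-pair Rabin (equivalently, three-priority parity) condition; its winning region is a nested fixpoint that is in general strictly larger than $\mathsf{Attr}_{\Min}(W_\tau)$, and solving that game for each candidate threshold would repair your argument while remaining in P. Note also that your procedure is not, as you claim, ``precisely'' the paper's rule: the paper never deletes expensive edges that lie on no accepting cycle, so such edges remain present at every stage of its iteration and inside every re-solved \buchi game, which is exactly what accommodates the \Max-contingent transient use that your cheap-subarena-plus-attractor decomposition loses.
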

\begin{proof}[Sketch]
Note that the removal of any edge $e$ from the synchronized product that causes the \buchi-objective to no longer be satisfied guarantees that all the rewrite strategies for at least one trace do not satisfy the \buchi objective.
Hence the removal prevents the satisfaction of either the acceptance of RM $T$ or the NBA $B$, and in either case, leads to a trace of the Kripke structure that cannot be rewritten to some word that is accepted by the NBA $B$. 
\end{proof}

\section{Impair Verification}
\label{sec:verification}

Given the Kripke structure $K$ representing the system, the $\omega$-regular language $L$ capturing undesirable behavior, represented as an NBA $B$, a repair machine $T$, and a cost semantics $\C \in \{\mathsf{DSum_{\lambda}}, \mathsf{Mean}, \mathsf{Sup}, \mathsf{LimSup}\}$, 
we reduce the impair verification problem to the threshold verification problem.
The threshold verification problem is to find a partition of $\Qq_{\geq 0}$  into two sets $\mathbb{G}$ and $\mathbb{B}$, such that none of the traces to system can be rewritten to a word that is in the language of $B$ for all $v \in \mathbb{G}$.
Let $\tau^{*}$ denote the infimum value for which a trace $t \in \Tt_K$ can be rewritten to some word $w \in \Gamma^{\omega}$ such that $w \in L$.
Then, the threshold verification problem is solved for any $\tau < \tau^{*}$, as $\sem{T}^\C_{\tau}(t) \not\subseteq L$ for every trace $t \in \Tt_K$.
Thus the set $\mathbb{G} = (0, \tau^{*})$ and the set $\mathbb{B} = [\tau^{*}, \infty)$ and problem reduces to finding the optimal threshold $\tau^{*}$. 

In order to find the optimal threshold $\tau^{*}$, we construct the synchronised product $G^\times = K{\times}T{\times}B$ as detailed in Definition~\ref{def:prod}. 
We prune $G^\times$ to keep only those states from where player Max has a winning strategy against the B\"uchi objective.  The construction is similar to B\"uchi games, except that the opponent has no choice.
In the following, we refer to this pruned graph as $G$.

\subsection{Optimal Threshold for $\dsum$-RM}
In the case of a $\dsum$-RM, we show that the optimal threshold $\tau^{*}$ is the minimum infinite discounted cost path in $G$. 
While it may not be possible to achieve this cost, for any $\varepsilon > 0$ we show the existence of a finite memory strategy of player Max that guarantees that some rewriting with threshold of $\tau^{*} + \varepsilon$ is in $L$.

We claim that the optimal threshold $\tau^*$ is the minimum discounted cost in $G$.
To find this value,  we associate a variable $\Vv_s$, to each vertex $v \in V$, characterizing minimum discounted 
cost among all paths starting from the state $s$. 
The minimum discounted values can then be characterized as~\cite{PutermanMDP}:
\[
	\Vv_v = \min_{(v, v') \in E} \set{W(v, v') + \lambda \cdot \Vv_{v'}}
\]
This equation can be computed by solving the following LP.
\begin{align*}
	 \text{ max } \sum_{v \in V} \Vv_v &\text{ subject to: }&
	 \Vv_v \leq W(v,v') + \lambda \Vv_{v} \text{ for all $(v, v') \in E$}.
\end{align*}

\begin{wrapfigure}{r}{0.3\textwidth}
	\begin{tikzpicture}[->,thick]
	\node[initial left,cir,initial text={}]  (p0) {$v_0$} ;
	
	\node[cir] at (1,0) (p1) {$v_1$};
	\node[cir, accepting] at (2,0) (p2) {$v_2$};
	\path (p0) edge node[midway, above]{$1$} (p1);
	\path (p1) edge node[midway, above]{$1$} (p2);
	\path (p1) edge[loop above] node[midway,above]{$0$} (p1);
	\path (p2) edge[loop above] node[midway,above]{$1$} (p2);
	\end{tikzpicture}
\end{wrapfigure}

A positional discount-optimal strategy can be computed from the solutions of these equations simply by picking a successor vertex minimizing the right side of the optimality equations. 
Observe, however, that the resulting path may not satisfy the \buchi condition. 
Consider the graph shown in the inset (right). In order to satisfy the B\"uchi objective, a run must visit the state $v_2$, while to minimize the discounted cost the strategy is to cycle in the state $v_1$ getting a discounted sum of $1$. While it is possible to achieve an $\varepsilon$-optimal discounted cost and satisfy the B\"uchi objective by looping on $v_1$ for an arbitrary number of steps before moving to the state $v_2$, no strategy satisfying the B\"uchi objective can achieve a $\dsum$ cost of $1$.


\begin{theorem}
\label{thm:dsum-verif}
The optimal threshold $\tau^{*}$ for $\dsum$-RMs can be computed in P. 
\end{theorem}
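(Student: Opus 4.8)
The plan is to reduce the computation of $\tau^*$ to finding the minimum discounted-cost path in the pruned graph $G$, and then to argue that the \buchi acceptance constraint can be met "for free" in the limit because of discounting. Since in the impair setting player Max resolves both the trace selection and the rewriting, there is no genuine adversary: an accepting rewriting of some trace within cost $\tau$ corresponds exactly to an infinite path in $G^\times$ that starts in $V_I$, visits $V_F$ infinitely often, and has discounted cost at most $\tau$. Recall that $G$ keeps only the states from which such an accepting path exists, and this pruning is itself computable in P (decompose into SCCs, mark those containing a vertex of $V_F$ lying on a cycle, and take backward reachability). It therefore remains to compute the infimum discounted cost of a \buchi-satisfying path from $V_I$ inside $G$.

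First I would compute, for every vertex $v$ of $G$, the \emph{unconstrained} minimum discounted cost $\Vv_v$ over all infinite paths from $v$, dropping the \buchi requirement. This is precisely the single-player discounted optimality equation $\Vv_v = \min_{(v,v')\in E}\{W(v,v') + \lambda\,\Vv_{v'}\}$, which the displayed linear program solves in polynomial time; an optimal positional path is recovered by choosing at each vertex an edge attaining the minimum. Since relaxing a constraint can only lower the optimum, $\Vv_v$ is a lower bound on the \buchi-constrained infimum, so it suffices to show the two quantities coincide.

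The key step, and the main obstacle, is to show that this unconstrained optimum equals the infimum over \buchi-satisfying paths even though (as the inset example illustrates) it need not be attained by any such path. The argument is an $\varepsilon$-approximation: fix the positional optimal path $v=u_0,u_1,\dots$ witnessing $\Vv_v$, follow it for its first $N$ edges, and then---because $u_N$ survives in the pruned graph $G$ and hence still admits a \buchi-satisfying continuation---switch to such a continuation. As all weights are nonnegative, the length-$N$ prefix costs at most $\Vv_v$, while the discounted tail contributes at most $\lambda^N\cdot W_{\max}/(1-\lambda)$, where $W_{\max}$ is the largest edge weight. Letting $N\to\infty$ makes the tail vanish, so for every $\varepsilon>0$ there is a \buchi-satisfying path of discounted cost at most $\Vv_v+\varepsilon$; hence the constrained infimum equals $\Vv_v$. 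Setting $\tau^* = \min_{v\in V_I}\Vv_v$ then gives the optimal threshold, and every step---pruning, solving the linear program, and minimizing over the finite set $V_I$---runs in polynomial time, establishing the P bound. I would also note that this same gap is exactly why $\tau^*$ need not be realized by any single \buchi-satisfying rewriting, matching the placement of $\tau^*$ relative to $\mathbb{G}$ and $\mathbb{B}$ above.
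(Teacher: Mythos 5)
Your proposal is correct and takes essentially the same approach as the paper's proof: compute the unconstrained minimum discounted cost on the pruned graph $G$ via the optimality-equation LP, then show it equals $\tau^{*}$ by following the discount-optimal path for $N$ steps and switching to an accepting continuation, with the tail bounded by $\lambda^{N} W_{\max}/(1-\lambda)$. Your additions (the SCC-based pruning procedure and the explicit lower-bound remark) only make explicit what the paper leaves implicit.
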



\subsection{Optimal Threshold for $\mean$-RM}

In the case of the $\mathsf{Mean}$ aggregator function, we note that only those edges that are visited infinitely often have an effect on the cost.
We say that a cycle is accepting if there exists some vertex $v \in V_F$ that occurs in the cycle.  
We let $C_1$ denote the least average cost cycle that can be reached and is reachable from some accepting cycle $C_2$. 
We use $d_1$ and $d_2$ to denote the total cost of these cycles and $n_1$ and $n_2$ to be the number of edges in each of them respectively.
We then show that $\tau^{*}$ is the mean value of cycle $C_1$.
\begin{wrapfigure}{r}{0.3\textwidth}
	\begin{tikzpicture}[->,thick]
	\node[initial left,cir,initial text={}]  (p0) {$v_0$} ;
	\node[cir, accepting] at (1,0) (p1) {$v_1$};
	\path (p0) edge[bend left] node[midway, above]{$1$} (p1);
	\path (p1) edge[bend left] node[midway, below]{$1$} (p0);
	\path (p1) edge[loop above] node[midway,above]{$1$} (p1);
	\path (p0) edge[loop above] node[midway,above]{$0$} (p0);
	\end{tikzpicture}
\end{wrapfigure}
We observe that a strategy to determine this optimal threshold requires infinite memory.
However for any $\varepsilon > 0$, there exists a finite memory strategy that is $\varepsilon$ close to $\tau^{*}$.
Consider the graph shown in the inset (right) and the following strategy adopted by Player Max.
Player Max cycles between $v_0$ and $v_1$ in rounds.
At any given round $i$, Player Max cycles on $v_0$ for $2^i$ times, and then moves and cycles once in $v_1$ and returns to $v_0$.
Observe this strategy ensures that the \buchi objective is satisfied while also ensuring the $\mean$ cost to be $0$ but requires infinite memory to keep track of the rounds.
However, Player Max can achieve a $\varepsilon$-optimal mean cost by limiting the number of rounds.
\begin{theorem}
\label{thm:mean-verif}
The  threshold $\tau^{*}$ for $\mean$-RMs can be computed in P.
\end{theorem}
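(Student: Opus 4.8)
The plan is to reduce the computation of $\tau^{*}$ to a minimum-mean-cycle computation on the pruned product graph $G$. Since in the impair setting player Max resolves all nondeterminism (of the Kripke structure, the repair machine, and the NBA), $G$ is effectively a one-player arena: after pruning to the region from which $V_F$ can be visited infinitely often, $\tau^{*}$ is exactly the infimum of $\mean(W(\pi))$ over all infinite paths $\pi$ of $G$ that start in $V_I$ and visit $V_F$ infinitely often. I would first prove that this infimum equals the mean value $d_1/n_1$ of $C_1$, where $C_1$ is the minimum-mean cycle ranging over all maximal strongly connected components (SCCs) of $G$ that are reachable from $V_I$ and contain a vertex of $V_F$; such an SCC automatically contains an accepting cycle $C_2$ that is mutually reachable with $C_1$.

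Then I would split the argument into the two inequalities. For the upper bound $\tau^{*} \le d_1/n_1$, I would formalise the round-based, finite-memory strategy sketched in the inset: within the SCC containing $C_1$ and an accepting cycle $C_2$, in round $i$ the play traverses $C_1$ for $2^{i}$ iterations, then moves along a fixed finite path to $C_2$, traverses $C_2$ once (hitting $V_F$), and returns to $C_1$ along a fixed finite path. Since the transit paths have bounded length and cost while the number of $C_1$-iterations grows geometrically, the fraction of edges incurring non-$C_1$ cost tends to $0$; hence the running average converges to $d_1/n_1$ and $V_F$ is visited infinitely often. Limiting the number of rounds yields, for every $\varepsilon>0$, a genuinely finite-memory play whose Mean cost is below $d_1/n_1+\varepsilon$, establishing $\tau^{*}\le d_1/n_1$ and explaining why the infimum is in general not attained.

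For the lower bound $\tau^{*}\ge d_1/n_1$, I would take any infinite path $\pi$ from $V_I$ satisfying the \buchi condition and consider the set of edges it traverses infinitely often. Because only these edges influence the $\limsup$ of the running averages, and because the infinitely-traversed edges eventually confine $\pi$ to a single SCC that (by the \buchi condition) contains a vertex of $V_F$, the limsup-average of $\pi$ is bounded below by the minimum-mean cycle of that SCC. As that SCC is reachable from $V_I$ and accepting, its minimum-mean cycle is at least $d_1/n_1$ by the choice of $C_1$; hence $\mean(W(\pi))\ge d_1/n_1$. The standard fact that the $\limsup$ of the averages of a walk in a finite weighted graph is at least the minimum mean over the cycles it uses infinitely often is the technical crux here.

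Finally, for the complexity claim I would observe that the pruning to Max's \buchi-winning region is a linear-time reachability/SCC computation (invoking Theorem~\ref{thm:positional} only trivially, since Max has no opponent), that SCC decomposition is linear, and that the minimum-mean cycle of each SCC can be computed in polynomial time (e.g.\ by Karp's algorithm); taking the minimum over the polynomially many accepting, reachable SCCs yields $\tau^{*}=d_1/n_1$ in P. The main obstacle I anticipate is the careful $\limsup$ bookkeeping in the lower bound—ensuring that transient prefixes and the bounded transit segments between $C_1$ and $C_2$ do not perturb the asymptotic average—rather than any of the graph-algorithmic steps.
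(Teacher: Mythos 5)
Your proposal is correct and takes essentially the same route as the paper's proof: the same characterization of $\tau^{*}$ as the mean value $d_1/n_1$ of the minimum-mean cycle $C_1$ that is mutually reachable with an accepting cycle, the same round-based strategy (iterate $C_1$ for $2^{i}$ rounds, visit $C_2$ once) with the limit/Cauchy argument giving finite-memory $\varepsilon$-optimality, the same infinitely-visited-edges argument for the lower bound, and Karp's algorithm for the polynomial-time claim. Your explicit SCC decomposition is just a slightly cleaner packaging of the paper's condition that $C_1$ be reachable from $V_I$ and co-reachable with some accepting cycle.
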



\subsection{Optimal Thresholds for $\supf$-RMs and $\limsupf$-RMs}
For the $\mathsf{Sup}$ aggregator function, let $S$ be the set of values $c_i$ such that $c_i$ is the supremum of the cost of some lasso that starts from some $v_i \in V_I$ and cycles in a loop containing some $v_f \in V_F$. Let $k$ be the least element in $S$. We claim $\tau^{*} = k$.
Similar to the $\mathsf{Sup}$ aggregator function, we consider the set $S$ to contain the values $c_i$ such that $c_i$ is the supremum of the costs of the edges in the cycles that visit some $v_f \in V_F$ in the case of the $\limsupf$ aggregator function.  We then take the least of these to be the optimal threshold for the $\limsupf$-RMs. 
\begin{theorem}
\label{thm:sup-verif}
The threshold  $\tau^{*}$ for $\supf$ and $\limsupf$-RMs can be computed in P.
\end{theorem}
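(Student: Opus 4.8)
My plan is to mirror the $\dsum$ and $\mean$ developments and reduce the problem to a minimum-cost search over accepting lassos in the pruned product $G$. First I would record the correctness of the reduction: after pruning, every infinite path from $V_I$ projects to a trace $t \in \Tt_K$ together with a rewriting whose run on $B$ is defined, and such a path is accepting (meets $V_F$ infinitely often) precisely when the produced word lies in $L$. Consequently impair verification fails at a threshold $\tau$ iff $G$ admits an accepting path whose $\C$-cost is at most $\tau$, and because the $\supf$ and $\limsupf$ objectives admit lasso witnesses, $\tau^{*}$ is the minimum $\C$-cost over accepting lassos. A point I would stress is that for these two aggregators this minimum is genuinely \emph{attained}, so that $\tau^{*}$ itself belongs to $\mathbb{B}$ (unlike the $\varepsilon$-approximations needed for $\dsum$ and $\mean$).

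For $\supf$ I would observe that along a lasso $\pi = u\, v^{\omega}$ whose cycle $v$ meets $V_F$, the value $\supf(W(\pi))$ equals the largest edge weight occurring on $u$ or $v$; hence the set of achievable values is exactly the set $S$ in the statement and $\tau^{*} = \min S = k$. The inequality $\tau^{*} \le k$ follows by exhibiting the lasso that realizes $k$, whose maximal-weight edge is actually traversed, so the adversary attains cost $k$. For $\tau^{*} \ge k$ I would argue by contradiction: an accepting run of $\supf$-cost strictly below $k$ uses only edges of weight $< k$, so the subgraph $G_{<k}$ of such edges contains an accepting cycle reachable from $V_I$, out of which one extracts a lasso of $\supf$-cost $< k$, contradicting minimality of $k$.

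For $\limsupf$ I would reuse the same scheme, the sole change being that only the cycle edges of a lasso occur infinitely often, so $\limsupf(W(u\, v^{\omega}))$ equals the maximal weight on the cycle $v$ alone and the stem is irrelevant. This is exactly why $S$ is defined here from the suprema of edge weights of the $V_F$-cycles (reachable from $V_I$ along an arbitrary stem); both bounding arguments transfer after replacing ``edge used'' by ``edge used infinitely often.''

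For the complexity claim I would compute $k$ by scanning the at most $|E|$ distinct edge weights $c$: for $\supf$, test in the subgraph $G_{\le c}$ whether an accepting cycle is reachable from $V_I$ (an SCC computation plus a reachability check); for $\limsupf$, test whether $G_{\le c}$ has a $V_F$-cycle one of whose vertices is reachable from $V_I$ in the full $G^{\times}$. Returning the least passing $c$ yields $\tau^{*}$ in polynomial time. The main obstacle I anticipate is getting the stem-versus-cycle distinction between the two aggregators exactly right, together with the lasso-extraction step that turns a bounded-weight accepting run into a bounded-weight lasso; everything else is routine graph analysis on $G^{\times}$.
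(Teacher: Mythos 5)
Your proposal is correct and follows essentially the same route as the paper: both characterize $\tau^{*}$ as the least element $k$ of the set $S$ of accepting-lasso costs, prove $\tau^{*}\leq k$ by exhibiting the witness lasso and $\tau^{*}\geq k$ by contradiction, and conclude with routine polynomial-time reachability analysis on $G^{\times}$. Your write-up is in fact somewhat more explicit than the paper's on two points it glosses over---the extraction of a cheap lasso from an arbitrary cheap accepting run (via the subgraph $G_{<k}$), and the concrete algorithm (threshold scanning with SCC/reachability checks, versus the paper's terse ``two-step reachability'' with stored maxima)---but these are refinements of the same argument, not a different one.
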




\section{Mask Synthesis}
\label{sec:mask}
Given a Kripke structure $K$ representing the system, an $\omega$-regular language $L$ capturing the  undesirable behavior given as an NBA $B$, repair machine $T$, a cost semantics $\C \in \{\mathsf{DSum_{\lambda}}, \mathsf{Mean}, \mathsf{Sup}, \mathsf{LimSup}\}$, and $\tau \in \Qq_{\geq 0}$, the problem of \emph{mask synthesis} is to find a maximal subset $N' \subseteq \Tt_K$ such that all traces $t \in N'$  pass the impair verification.

It is well known that every Kripke structure admits an $\omega$-regular language $N$ such that a word $u \in N$ if and only if $u \in \Tt_K$. 
Let the $\omega$-regular language of $K$ be $N$.
To solve the mask synthesis problem, we restrict the domain of the repair machine $T$ to $N$ by constructing a repair machine $T'$ using product construction and give our results on the repair machine $T'$.

\subsection{Mask Synthesis for $DSum$-RMs.} 

We show that the maximal subset $N'$ for isolated cut-point languages~\cite{QuantitativeLanguagesKLT} is $\omega$-regular.
Given a threshold $\tau \in \Qq$, the maximal subset $N'$, is the set of all words $u \in \dom(T')$, such that for every word $w' \in \sem{T}^{\mathsf{DSum}}_{\tau}(u)$ we also have $w \notin L$.
A threshold $\tau$ is $\varepsilon$-isolated for RM $T'$, if for $\varepsilon > 0$ and all accepting runs $r$ of $T'$, 
\[
\sem{T'}^{\mathsf{DSum}}_{*}(r,w) \in [0, v{-}\varepsilon] \cup [v{+}\varepsilon, \infty).
\]
It is isolated if it is isolated for some $\varepsilon$.
To prove that $N'$ is $\omega$-regular for such thresholds, we first note that isolated-cut point languages are $\omega$-regular in the context of weighted automata~\cite{ExpressivenessQL}.
We follow a similar strategy to~\cite{concur20}, and slowly unroll our synchronous product.
We note that since the repair machine is over $\omega$ strings, there must exist some $n$ such that 
\[
\mathsf{DSum}(w_0 w_1 \ldots) \leq \mathsf{DSum}(w_0 \ldots w_n) + B_n,
\]
where $B_n = V \frac{\lambda^{n}}{1-\lambda}$, where $V$ is the largest cost that is not $\infty$. Therefore if $\mathsf{DSum}(w_0, w_1, \ldots) \leq v - \varepsilon + B_n$ we can conclude that $\mathsf{DSum}(w_0, \ldots, w_n) \leq v - \varepsilon$.    

\begin{lemma}\label{lem:dsum1}
	Let $T'$ be a $\dsum$ repair machine and $\tau \in \Qq$. 
	If $\tau$ is $\varepsilon$-isolated for some $\varepsilon$, then there is $n^* \in \Nn$ such that any partial run $r$ of length at least $n^{*}$ satisfies one of the following properties:
	\begin{enumerate}
		\item $\dsum(r) {\leq} \tau {-} \varepsilon$ and $\dsum(rr') {\leq} \tau {-} \varepsilon$ for every 		infinite continuation $r'$ of $r$.
		\item $\dsum(r) {\geq} \tau {+} \frac{\varepsilon}{2}$ and $\dsum(rr') {\geq} \tau{+} \varepsilon$ for every infinite continuation $r'$ of $r$.
	\end{enumerate}
	Here, for finite $r$, $\dsum(r)$ is defined in the usual fashion except that the summation will be upto the length of $r$.
\end{lemma}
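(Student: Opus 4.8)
The plan is to exploit the fact that the discounted sum of an infinite run is determined up to an arbitrarily small error by a sufficiently long prefix, and then to use the isolation hypothesis to round that estimate to the correct side of $\tau$. First I would record two elementary facts about $\dsum$ when all transition costs lie in $\{0,1,\dots,V\}$, where $V$ is the largest finite cost of $T'$. For a partial run $r$ of length $m$ and any infinite continuation $r'$, non-negativity of costs gives the monotonicity $\dsum(r) \le \dsum(rr')$, while the geometric tail estimate gives $\dsum(rr') \le \dsum(r) + B_m$ with $B_m = V\lambda^{m}/(1-\lambda)$. Since $0 \le \lambda < 1$, the sequence $(B_m)_m$ is non-increasing and tends to $0$, so I take $n^{*}$ to be the least index with $B_{n^{*}} \le \varepsilon/2$; then $B_m \le \varepsilon/2$ for every $m \ge n^{*}$, and hence every continuation of a partial run $r$ of length $m \ge n^{*}$ has total cost in the interval $[\dsum(r),\, \dsum(r) + \varepsilon/2]$.

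The key step is a dichotomy on the prefix value $\dsum(r)$. Restricting attention to continuations that complete $r$ to an accepting run (these are the only runs relevant to membership in $L$, so every prefix considered is co-accessible to $V_F$), I would show $\dsum(r) \notin (\tau - \varepsilon,\ \tau + \varepsilon/2)$. Indeed, suppose $\dsum(r) > \tau - \varepsilon$. Then every accepting continuation satisfies $\dsum(rr') \ge \dsum(r) > \tau - \varepsilon$, so by isolation (which forbids accepting-run values in $(\tau-\varepsilon,\tau+\varepsilon)$) we must have $\dsum(rr') \ge \tau + \varepsilon$; combined with $\dsum(rr') \le \dsum(r) + \varepsilon/2$ this yields $\dsum(r) \ge \tau + \varepsilon/2$. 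Thus either $\dsum(r) \le \tau - \varepsilon$ or $\dsum(r) \ge \tau + \varepsilon/2$, which are precisely the prefix conditions of the two stated cases.

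Finally I would convert each case into its tail conclusion by a second application of isolation. In the first case, $\dsum(r) \le \tau - \varepsilon$ gives $\dsum(rr') \le \dsum(r) + \varepsilon/2 \le \tau - \varepsilon/2 < \tau + \varepsilon$, so isolation upgrades this to $\dsum(rr') \le \tau - \varepsilon$, establishing property~(1). In the second case, $\dsum(r) \ge \tau + \varepsilon/2$ together with monotonicity gives $\dsum(rr') \ge \tau + \varepsilon/2 > \tau - \varepsilon$, and isolation again upgrades this to $\dsum(rr') \ge \tau + \varepsilon$, establishing property~(2).

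I expect the main obstacle to be the careful bookkeeping of the asymmetric thresholds: the bound one can read off from the tail estimate is only $\varepsilon/2$-accurate, whereas the conclusions demand the sharp $\tau \pm \varepsilon$ separation, and it is exactly the isolation hypothesis that bridges this $\varepsilon/2$ gap in both the prefix dichotomy and the continuation bounds. A secondary point to state explicitly is that the quantification over continuations is over those completing to accepting runs; this is legitimate because the lemma is applied in the product restricted to states co-accessible to $V_F$, where a prefix with $\dsum(r)$ in the gap $(\tau-\varepsilon,\tau+\varepsilon/2)$ simply has no accepting continuation and so is never produced.
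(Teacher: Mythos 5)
Your proposal is correct and follows essentially the same argument as the paper's proof: the geometric tail bound $B_n = V\lambda^n/(1-\lambda)$ with $n^*$ chosen so that $B_{n^*}\le\varepsilon/2$, the dichotomy forced by $\varepsilon$-isolation (the paper splits on the value of the completed run $\dsum(rr')$ while you split directly on the prefix value $\dsum(r)$, a trivial reorganization), and a second application of isolation plus the tail bound to propagate each prefix bound to all continuations. Your explicit remark about restricting to prefixes with accepting continuations matches the paper's (implicit) trimness assumption on $T'$, so there is no gap.
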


\begin{theorem}
	\label{thm:dsum2}
	Let $T'$ be a $\dsum$ repair machine, $v \in \Qq$, and $L$ an $\omega$-regular language given by
	an NBA. For all $n$, we can construct an NBA $A_n$ such that 
        $L(A_n) \subseteq L(A_{n+1})$ and
	$L(A_n) \subseteq \overline{N'} \cap \dom(T')$. 
	Moreover, if $\tau$ is $\varepsilon$-isolated, there exists $n^*$ such that $L(A_{n^*}) = \overline{N'} \cap \dom(T')$. 
\end{theorem}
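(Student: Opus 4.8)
The plan is to exhibit $\overline{N'}\cap\dom(T')$ as an increasing union $\bigcup_n L(A_n)$ of $\omega$-regular languages obtained by unrolling the synchronous product $G^\times=K{\times}T'{\times}B$ to ever longer finite prefixes, declaring a word a member as soon as the discounted cost accumulated over a finite prefix already forces the total cost below $\tau$. I would lean on the tail bound noted just before Lemma~\ref{lem:dsum1}: any infinite run contributes at most $B_n=V\lambda^n/(1-\lambda)$ beyond its length-$n$ prefix, where $V$ is the largest finite edge cost. Hence a length-$n$ prefix $r$ with $\dsum(r)\le \tau-B_n$ certifies that every infinite continuation of $r$ has total discounted cost at most $\tau$; this is the only certificate $A_n$ will look for.

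\textbf{Construction of $A_n$.} I would let $A_n$ be a nondeterministic B\"uchi automaton that guesses and simulates a path of $G^\times$ over the input $u$ while, during its first $n$ steps, recording in its finite control the step counter $i\le n$ and the partial discounted cost $D_i=\sum_{j<i}\lambda^j c_j$; the control stays finite because only finitely many cost sequences of length $n$ exist. After exactly $n$ steps $A_n$ tests the guard $D_n\le\tau-B_n$: on failure the branch enters a rejecting sink, and on success $A_n$ discards the cost bookkeeping and proceeds as the plain product $G^\times$, accepting iff the degeneralized B\"uchi set $V_F$ (which tracks the accepting states of both $T'$ and $B$) is visited infinitely often. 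Thus $A_n$ accepts $u$ exactly when $G^\times$ has a B\"uchi-accepting path over $u$ whose length-$n$ prefix has discounted cost at most $\tau-B_n$.

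\textbf{Soundness and monotonicity (every $n$).} If $A_n$ accepts $u$, the witnessing path is an accepting run of $T'$ on $u$ whose output $w$ lies in $L$ and whose cost satisfies $\dsum(\Cc)\le D_n+B_n\le\tau$; therefore $u\in\dom(T')$ and $w\in\sem{T}^{\mathsf{DSum}}_\tau(u)\cap L$, so $u\notin N'$. This requires no isolation. For $L(A_n)\subseteq L(A_{n+1})$ I would use $B_n-B_{n+1}=V\lambda^n$ together with $\lambda^n c_n\le\lambda^n V$: a path certifying at level $n$ satisfies $D_{n+1}=D_n+\lambda^n c_n\le(\tau-B_n)+(B_n-B_{n+1})=\tau-B_{n+1}$, so it still certifies at level $n+1$.

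\textbf{Equality under isolation (the crux).} The delicate point, and the step I expect to be the main obstacle, is that the infimum defining $\sem{T}^{\mathsf{DSum}}_*$ need not be attained, so knowing merely that some output of cost $\le\tau$ exists yields no finite certificate by itself. Isolation supplies the missing uniform gap. Given $u\in\overline{N'}\cap\dom(T')$, some $w\in L$ has $\sem{T}^{\mathsf{DSum}}_*(u,w)\le\tau$, and $\varepsilon$-isolation forces this value into $[0,\tau-\varepsilon]$; consequently $G^\times$ admits a B\"uchi-accepting path $\rho$ over $u$ of total cost strictly below $\tau$. I would then fix the threshold index $n_0$ from Lemma~\ref{lem:dsum1} and choose $n^*\ge n_0$ with $B_{n^*}\le\varepsilon$, possible since $B_n\to0$. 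Applying Lemma~\ref{lem:dsum1} to the length-$n^*$ prefix $r^*$ of $\rho$, case~(2) is excluded because all its continuations cost at least $\tau+\varepsilon>\dsum(\rho)$; hence case~(1) gives $D_{n^*}(\rho)=\dsum(r^*)\le\tau-\varepsilon\le\tau-B_{n^*}$, so $A_{n^*}$ accepts $u$. Combined with the soundness inclusion this yields $L(A_{n^*})=\overline{N'}\cap\dom(T')$, and monotonicity propagates the equality to every index beyond $n^*$.
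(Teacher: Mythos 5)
Your proposal is correct and takes essentially the same route as the paper's proof: unroll the product of $T'$ with the NBA for $L$ to depth $n$, accept exactly when the length-$n$ prefix is ``dangerous'' in the sense $\dsum \leq \tau - B_n$ (so every continuation stays below $\tau$) and the continuation is \buchi-accepting, with monotonicity coming from $B_n - B_{n+1} = \lambda^n V$ and the fixed point under $\varepsilon$-isolation coming from Lemma~\ref{lem:dsum1}. The only deviations are cosmetic: the paper's $A_n$ stores the length-$\leq n$ prefix run itself in the state space (exiting dangerous runs via an $\varepsilon$-jump) whereas you store the step counter and accumulated discounted cost, and your monotonicity inequality $D_{n+1} \leq D_n + \lambda^n V \leq \tau - B_{n+1}$ is in fact the correct form of the inequality the paper states somewhat sloppily.
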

For the construction of $A_n$ in Theorem~\ref{thm:dsum2}, a notion of bad and dangerous runs are defined. 
Intuitively, The bad runs are all those runs which are accepting with cost $\leq \tau$, such that the output word is not in $L$.
The dangerous runs are the finite partial runs which can be extended to bad runs.
The idea for construction of $A_n$ is to identify all the finite partial runs $r$ of length $n$ which can later be extended to bad runs.
This way we can construct a sequence of \buchi automata that better under approximate the automata for the non-robust words in the domain.
Thanks to Lemma~\ref{lem:dsum1}, we can assure that there exists a fixed point at $n^*$ such that $A_{n^*}$ recognizes all the non-robust words from $T'$. 
\subsection{Mask synthesis for $\mean$-RMs}
The mask synthesis problem for $\mean$-RMs is already undecidable for finite words~\cite[Theorem 17]{concur20} and this result carries over to the case of
$\omega$-words.
\subsection{Mask synthesis for $\supf$-RMs and $\limsupf$-RMs.}
For the $\mathsf{Sup}$-RMs, we can construct an NBA recognizing all output words with a cost greater than $\tau$ via Lemma~\ref{lem:sup1} and show that the maximal subset $N'$ is $\omega$-regular. 
The results for $\supf$-RMs can be extended carefully to only account the costs occurring in accepting loops and be used for the $\limsupf$-RMs as well.
\begin{lemma}\label{lem:sup1}
	Given a $\mathsf{Sup}$-automaton $U= (Q,q_i,\delta,\gamma)$ over $\Sigma^{\omega}$, 
	and a rational number $\tau$, we can find an NBA  $A^{> \tau} = (Q^{> \tau}, q_i^{> \tau}, \delta^{> \tau}, F^{> \tau})$ such that 
	\[
	L(A^{> \tau}) = \{ w \mid U(w) > \tau \}.
	\]
\end{lemma}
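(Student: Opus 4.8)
The plan is to reduce the quantitative condition $U(w) > \tau$ to a purely qualitative (reachability) one and then read off a deterministic \buchi automaton by a subset construction. Recall that for the nondeterministic $\supf$-automaton $U$ the value of an infinite run is the supremum of the weights along it, and $U(w) = \inf_{\rho \in \Runs(U,w)} \supf(\gamma(\rho))$, the infimum ranging over all infinite runs of $U$ on $w$ (matching the $\inf$-over-runs semantics of $\sem{T}^{\supf}_*$; if $U$ has no run on $w$ we set $U(w){=}\infty$). The key observation I would record first is the equivalence
\[
U(w) > \tau \quad\Longleftrightarrow\quad U \text{ has no infinite run on } w \text{ using only transitions of weight } \le \tau,
\]
which holds because a run has $\supf$-value $\le \tau$ exactly when it never takes a transition of weight $> \tau$, and, the weights being drawn from a finite set, the infimum defining $U(w)$ is either attained or equal to $\infty$.

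Next I would introduce the \emph{low-weight sub-automaton} $U^{\le \tau}$, obtained from $U$ by deleting every transition of weight $> \tau$. By the equivalence above, an infinite run of $U^{\le\tau}$ on $w$ is precisely a run of $U$ on $w$ witnessing $U(w) \le \tau$, so $U(w) \le \tau$ iff $U^{\le\tau}$ admits an infinite run on $w$. I would then determinize $U^{\le\tau}$ by the standard subset construction over $\Sigma$: states are sets $R \subseteq Q$, the start state is $\{q_i\}$, and $R \xrightarrow{a} R'$ with $R' = \{\, q' \mid \exists q \in R,\ (q,a,q') \in \delta,\ \gamma(q,a,q') \le \tau \,\}$, so that after reading a prefix the reached subset is exactly the set of states reachable from $q_i$ by a low-weight run on that prefix.

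Correctness then follows from K\"onig's lemma: since $U^{\le\tau}$ is finitely branching and $Q$ is finite, $U^{\le\tau}$ has an infinite run on $w$ iff the reached subset $R_n$ is nonempty for every prefix length $n$. Consequently $U(w) > \tau$ iff $R_n = \emptyset$ for some $n$, and once a subset is empty it stays empty on every continuation. I would therefore take $A^{> \tau}$ to be this deterministic subset automaton with $\emptyset$ as an absorbing state looping on all letters, declaring $\emptyset$ the unique accepting (\buchi) state and all other subsets rejecting; visiting $\emptyset$ infinitely often then coincides with \emph{reaching} $\emptyset$, giving $L(A^{> \tau}) = \{\, w \mid U(w) > \tau \,\}$. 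This $A^{> \tau}$ is in particular an NBA, as required. I would note the edge case that words with no run of $U$ at all drive the subset to $\emptyset$ and are accepted, consistent with $U(w) = \infty > \tau$.

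The conceptually delicate point—and the thing I expect to be the main obstacle to a naive attempt—is that $U(w) > \tau$ is a \emph{universal} statement over runs (\emph{every} run must eventually see a high weight), which cannot be captured by simply guessing a single witnessing run nondeterministically. The subset construction is exactly what resolves this: it simultaneously tracks all surviving low-weight runs, so their collective extinction becomes a single reachability event. The only cost is the $2^{|Q|}$ state blow-up inherent in determinization, which is harmless for the $\omega$-regularity conclusion we need in the $\supf$-RM mask synthesis argument; the careful adaptation to $\limsupf$-RMs, restricting the extinction test to weights occurring in accepting loops, is then a refinement of this same construction.
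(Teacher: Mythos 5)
Your proposal is correct, but it takes a genuinely different route from the paper, and the difference is worth spelling out. The paper's proof is a two-copy ``flag'' construction: it sets $Q^{>\tau}=Q\times\{0,1\}$, flips the bit from $0$ to $1$ exactly when a transition of weight $>\tau$ is taken (bit $1$ is absorbing), and declares the bit-$1$ states accepting; the resulting NBA accepts $w$ iff \emph{some} infinite run of $U$ on $w$ uses a transition of weight $>\tau$. That construction is linear in $|Q|$ and immediate to verify, but it implements the \emph{existential} (supremum-over-runs) reading of $U(w)>\tau$, so it is sound only when $U$ is deterministic or when $U(w)$ is defined as the supremum over runs. You instead commit to the infimum-over-runs semantics (the one consistent with $\sem{T}^{\supf}_{*}$, and with how Lemma~\ref{lem:sup1} is invoked in the proof of Theorem~\ref{thm:sup-mask}, where $U$ arises by projecting the outputs out of $T'\otimes A$ and is therefore nondeterministic), under which $U(w)>\tau$ is a \emph{universal} statement over runs. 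Your subset construction on the low-weight sub-automaton $U^{\leq\tau}$, with K\"onig's lemma converting ``no infinite low-weight run'' into ``the set of surviving states eventually becomes empty,'' is exactly what is needed to capture that universal quantification, at the price of a $2^{|Q|}$ blow-up (harmless for the $\omega$-regularity conclusion). So the trade-off is: the paper's automaton is exponentially smaller and simpler, but correct only under the existential/deterministic reading; yours is larger but matches the semantics that the mask-synthesis argument actually needs, and the ``delicate point'' you flag is precisely what the paper's one-line correctness claim glosses over.
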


\begin{theorem}\label{thm:sup-mask}
	Let $T'$ be a $\supf$-RM, $\tau \in \Qq$ and $L$ be a $\omega$-regular language. The language of $N'$ is $\omega$-regular and we can effectively construct an NBA for it.
\end{theorem}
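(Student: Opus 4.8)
The plan is to build an NBA directly for $N'$ by expressing it as $\dom(T')\cap L(A^{>\tau})$ for a suitable $\supf$-automaton and then invoking Lemma~\ref{lem:sup1}. Recall that $N'$ is the set of \emph{robust} traces: $u\in N'$ iff $u\in\dom(T')$ and no accepting run of $T'$ on $u$ produces an output in $L$ with $\supf$-cost at most $\tau$. Writing $P(u)$ for the least $\supf$-cost of an accepting run of $T'$ on $u$ whose output lies in $L$ (and $P(u)=\infty$ if none exists), we have $u\in N'$ iff $u\in\dom(T')$ and $P(u)>\tau$. Since $\dom(T')$ is itself $\omega$-regular (it is the input projection of $T'$) and $\omega$-regular languages are closed under intersection, it suffices to show that $\{u : P(u)>\tau\}$ is $\omega$-regular and to build an NBA for it; this is exactly the shape in which Lemma~\ref{lem:sup1} delivers its automaton, which explains why the $>\tau$ direction of the lemma is the useful one here.

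First I would realise $P$ as a single $\supf$-automaton over the input alphabet $\Sigma$. Following Definition~\ref{def:prod}, I form the synchronised product of $T'$ and $B$, using the extended transition relation $\widehat{\delta}$ of $B$ to consume the (multi-letter) outputs $w\in\Gamma^{*}$ emitted by $T'$, and the $\set{1,2}$-counter degeneralisation to combine the \buchi acceptance of $T'$ with that of $B$. Crucially, unlike Definition~\ref{def:prod} I do \emph{not} minimise over outputs in the weight function: each product transition carries the genuine cost $c$ of the underlying $T'$-transition $(q,\Ll(s),q',w,c)$ together with the matching $B$-transition $(p,w,p')\in\widehat{\delta}$, so that distinct outputs stay distinguishable by $B$. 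Projecting onto $\Sigma$ yields a nondeterministic \buchi automaton with $\Nat$-weights whose accepting runs on $u$ are precisely the accepting runs of $T'$ on $u$ with output in $L$. Reading its value as the infimum over accepting runs of the supremum of the traversed weights gives exactly $P(u)$; because each run's supremum is one of the finitely many weights occurring in the product, this infimum is attained, so $P(u)\le\tau$ iff some accepting run uses only weights $\le\tau$.

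With $P$ in hand as a $\supf$-automaton, I would apply Lemma~\ref{lem:sup1} to obtain an NBA $A^{>\tau}$ with $L(A^{>\tau})=\{u : P(u)>\tau\}$, and then take the product of $A^{>\tau}$ with an NBA for $\dom(T')$ to recognise $N'=\dom(T')\cap L(A^{>\tau})$. This establishes simultaneously that $N'$ is $\omega$-regular and that it is effectively constructible, with no separate complementation step required.

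The main obstacle I anticipate is matching the product $P$ to the exact hypotheses of Lemma~\ref{lem:sup1}: the lemma is phrased for a $\supf$-automaton $U=(Q,q_i,\delta,\gamma)$, whereas $P$ additionally carries a \buchi acceptance condition coming from the interaction of the two degeneralised \buchi conditions, so I would need to verify that the lemma's construction (or a routine adaptation tracking acceptance alongside the guess that every accepting continuation must eventually exceed $\tau$) goes through in the \buchi-accepting setting. Two secondary points also need care: (i) that consuming $\Gamma^{*}$-outputs through $\widehat{\delta}$ keeps the product finite and its runs faithful to $\sem{T'}^{\supf}_{*}$; and (ii) confirming the attainment of the infimum defining $P(u)$, so that the non-strict threshold in $\sem{T'}^{\supf}_{\tau}$ aligns cleanly with the strict $>\tau$ produced by Lemma~\ref{lem:sup1}.
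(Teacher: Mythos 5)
Your proposal is correct and takes essentially the same route as the paper's proof: form the synchronized product of $T'$ with the NBA for $L$, project out the outputs to obtain a $\supf$-automaton over the input alphabet, and invoke Lemma~\ref{lem:sup1} to recognize the words of value greater than $\tau$, restricting to $\dom(T')$. If anything, you are more careful than the paper, which neither makes the intersection with $\dom(T')$ explicit nor addresses the mismatch you flag between the product's \buchi acceptance condition and the acceptance-free $\supf$-automaton assumed in Lemma~\ref{lem:sup1}.
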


\section{Conclusion}
\label{sec:conclusion}
This paper presented a generalization of fundamental problems on weighted transducers and
robustness threshold synthesis for $\omega$-words.
We proposed and solved the problem of minimal cost repair formulated as
two player games on weighted transducers.
We note that this problem is similar to multi-objectives optimization where the
goal of the players is to satisfy an $\omega$-regular property while optimizing
a quantitative payoff.
We believe that the repair problem may find applications in designing mitigation
policies against side-channel vulnerability where some confidential property of
the system is leaking in the output trace, and the goal is to find a
minimum-cost repair to make the system opaque.
We also considered a related problem of impair verification that is related to availability problem where an attacker intends to rewrite the observations of the system to make it satisfy some undesirable behavior. 
Some potential future directions include the study of stochastic repair of Kripke structures and the repair of stochastic systems (Markov decision processes) with or without the knowledge of the environment.

\bibliographystyle{splncs04}
\bibliography{papers}

\clearpage
\appendix
\newcommand{\runs}{\mathsf{Runs}^{\leq n}_{T'}}
\section{Details from Section~\ref{sec:definition}}
\subsection{Example Repair Machine}
\label{ap:eg_rm}
	Consider the repair machine $T$ shown in Figure \ref{eg1} and an input $u=ba(ab)^{\omega}$. 
	The accepting run for $u$ visits states   
	$\ell_0, \ell_1, (\ell_2,\ell_3)^{\omega}$.  
	So we have  $\sem{T}(u)=(cc(ccd)^{\omega},20(14)^{\omega})$. 
	If $T$ is a $\dsum$-RM, with the value of $\lambda=\frac{1}{2}$, we have 
		$\sem{T}^{\mathsf{Dsum}}_{*}(ba(ab)^{\omega}, cc(ccd)^{\omega})=2+\frac{1}{4}+\frac{1}{2}+\frac{1}{16}+\frac{1}{8}+ \dots=3.
$	Likewise, 
	considering $\C =\mathsf{Sup}$, we note that $\sem{T}^{\mathsf{Sup}}_{*}(ba(ab)^{\omega}, cc(ccd)^{\omega})=4$, and for
	$C=\mathsf{LimSup}$, we have  $\sem{T}^{\mathsf{LimSup}}_{*}(ba(ab)^{\omega}, cc(ccd)^{\omega})=4$. 
	Lastly, for the case of $C=\mathsf{Mean}$, we have 
	$\sem{T}^{\mathsf{Mean}}_{*}(ba(ab)^{\omega}, cc(ccd)^{\omega})=\limsup\{2, 1,1,\frac{7}{4},\frac{8}{5}, 
	\frac{12}{6},\frac{13}{7},\frac{17}{8} \dots\}=2.5.
$
	
	\begin{figure} [h]
	\begin{center}
		\begin{tikzpicture}[->,thick]
		\node[initial above,cir,initial text={},accepting] at (-6,-1) (A)
		{$\ell_0$} ; 
		\node[cir,initial text={}] at (-3,-1) (B)
		{$\ell_1$} ; 
		
		\node[cir,accepting] at (0,-1) (C) {$\ell_2$};
		\node[cir] at (3,-1) (D) {$\ell_3$};

		\path (A) edge[loop below] node [midway,below] {$a|d,1$}(A);
		\path (A) edge node [midway,below] {$b|c$,2}(B);
		\path (B) edge node [midway,below] {$a|c$,0}(C);
		\path (C) edge node [midway,below] {$a|c$,1,$~~b|c$,1}(D);
		
		\path (D) edge[bend right=50] node [midway,below] {$b|cd$,4}(C);      
		\path (D) edge[bend right=50] node [midway,above] {$a|\epsilon$,1}(B);      
		
		\end{tikzpicture}                                
		\caption{An example of a repair machine.}
		\label{eg1}
	\end{center}
\end{figure}
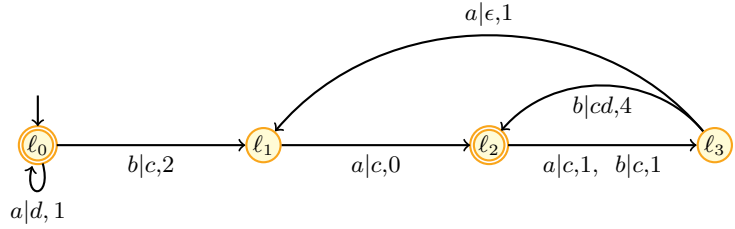

\section{Proofs from Section~\ref{sec:synthesis} }
\subsection{Proof of Theorem~\ref{thm:dsum-synth}}
\begin{proof}
    Consider the game arena $\Gg$, and the $\dsum_{\gamma}$-game on $\Gg$ with $\gamma = \sqrt{\lambda}$.
 Let the vertex $v_0 \in V_I$ be the initial vertex with the maximum value when the game is determined with a value $\tau_{r}$. 
 That is, $\VAL_{\dsum}(v_0) \geq \VAL_{\dsum}(v_i)$ for all vertices $v_i \in V_I$, and $\VAL_{\dsum}(v_0) = \tau_{r}$.
 We claim that the value of this game is $\tau^{*}$.
The proof of this is in two parts.
\begin{itemize}
    \item [(1)]
    Let the optimal play by the players from $v_0$ correspond to a sequence of vertices and edges denoted as $\seq{v_0, e_0, v'_0, e'_0, v_1, e_1, v'_1, e'_1  \ldots}$. We have the value of  $\tau_{r} = \overline{W}(e_0) + \gamma \overline{W}(e'_0) + \gamma^2 \overline{W}(e_1) + \ldots $ as the discounted sum of this path.
    The selection of edges of the form $e'_i$ are by player Max and have a cost $0$ while the selection of edges of the form $e_i$ are by player Min and have a noon-zero cost $\overline{W}(e_i)$. 
    As this is the optimal play, there is no strategy by which player Min can unilaterally decide on a better strategy.
    Thus, the value $\tau_{r} = \overline{W}(e_0) + \gamma^2 \overline{W}(e_1) + \ldots $, with $\gamma = \sqrt{\lambda}$.
    The selection of edges $e'_i$ by player Max corresponds to the selection of a trace $t \in \Tt_K$. 
    A valid rewriting of this trace has a $\dsum$ cost of at most $\tau$, where the value $\tau = \overline{W}(e_0) + \lambda \overline{W}(e_1) + \ldots $, as this corresponds to the optimal selection of a rewriting of $t$. 
Thus the least cost rewrite of $t$ has a value of at least $\tau_{r}$ contradicting our assumption.
\item [(2)]
For any $\varepsilon > 0$, the value $(\tau_{r} + \varepsilon) \in \mathbb{G}$.
Let $W$ be an upper bound on the weights in $\Gg$, then for any $\varepsilon > 0$, there exists some $k\in \mathbb{N}$ such that $\lambda^{k} \frac{W}{1 - \lambda} \leq \varepsilon$. 
We thus follow the optimal discounted strategy for $k$ steps and then choose the strategy of the \buchi-game to follow a path to an accepting cycle and cycle thereafter.
Thus the resulting path has a discounted cost at most $\tau_{r} + \varepsilon$. 
As this path also wins the \buchi-game on $\Gg$ with discounted cost at most $\tau_{r} + \varepsilon$, the corresponding output word $w$ must also be in $L$.
\end{itemize}
We note that the time taken to construct graph $\Gg$ is polynomial in the size of $K,T$ and $B$ as constructing the synchornized product, solving the \buchi game and pruning all take polynomial time.
However, from Theorem~\ref{thm:positional}, we note that solving the $\dsum_{\gamma}$-game on  $\Gg$ is in NP $\cap$ co-NP. 	 
\qed
\end{proof}

\subsection{Proof of Theorem~\ref{thm:mean-synth}}

\begin{proof}
 We play a \buchi-game on $\Gg$ and find the set of \buchi cycles for each vertex $v_f \in V_F$. We note that the \buchi-game is determined from Theorem~\ref{thm:positional} and so the optimal play by both players decides on this cycle. We then find the accepting vertex $v_j \in V_F$, with the least mean value of the \buchi cycle that can be reached from $v_f$ and denote its mean weight as $M_1$.
 
 We then play the $mean$-game on $\Gg$ and determine the positional strategy for each vertex $v \in V$.
 Finally, we check if a vertex in the \buchi cycle is co-accessible according to the $\mean$-game strategy. Let the \buchi cycle for a vertex $v_f \in V_F$ be denoted as $C_f$. We find if each vertex $v_i$ in $C_f$ is co-accessible based on the $\mean$-game. If so, we denote the cost of that cycle as $M_2$ and note that its mean cost if it is less than $M_1$, otherwise we use $M_2$ to denote the cost $M_1$. 

We show that value $\tau^{*}$ is twice $M_R$, where $M_R$ is the maximum value $M_1$ for every accepting state $v_f \in V_F$.

If the accepting state does not satisfy the second condition, we note that player Max has a strategy to ensure that the $\mean$-game strategy cannot be taken for any vertex in the \buchi cycle of vertex $v_f$ such that it returns to the same vertex. 
Then we conclude that there exists a trace $t$ which forces the synchronized product into this accepting state, and prevents any of the vertices from following a strategy to minimize the mean cost. 
If some vertex $v$ in the cycle is co-accessible, we instead note that player Min has a strategy to achieve a minimum value of at least $M_2$ regardless of the choices of player Max and similarly has a way to ensure that the \buchi cycle is taken regardless of the choices of player Max.
Thus player Min follows a strategy of alternating between the two cycles in rounds.
At any round $i$, player Min follows the $\mean$-game strategy for $2^i$ repetitions and then follows the \buchi cycle once. Such a strategy leads to a mean cost of $M_2$ while still satisfying the \buchi acceptance condition. 
As the game arena $\Gg$ has two edges for each edge in $G^\times$, the value of $\tau^{*} = 2M_R$.
For the complexity, 
We note that the time taken to construct graph $\Gg$ is polynomial in the size of $K,T$ and $B$ as constructing the synchornized product, solving the \buchi game and pruning all take polynomial time.
Further, reachability and \buchi-games are in $P$ from \cite{Jones_reach}, and Theorem~\ref{thm:positional}, while $\mean$-games are in NP $\cap$ co-NP. Checking if each vertex repeats in its corresponding strategy enumerates at most all the vertices. Checking the mean costs of the \buchi cycles and the least cost cycles is polynomial in the number of edges. 
Thus the problem is in NP $\cap$ co-NP.
\qed
\end{proof}
\subsection{Proof of Theorem~~\ref{thm:sup-synth}}
\label{ap:sup-synth}
\begin{proof}
We first show that the value $\tau_{r}$ is equal to the value $\tau^{*}$. 
First $\tau^{*}\geq \tau_{r}$. 
Let us suppose otherwise, then there must be a strategy to rewrite every trace $t \in \Tt_K$ to some word $w \in L$ with a cost that is less than $\tau_{r}$. 
We note from our procedure that the removal of the edge $e$ leads to a failure in satisfying the \buchi conditon, and so there must be some trace $t \in \Tt_K$ that cannot be rewritten to some word $w \in L$ contradicting our claim.
Now we show that $\tau^{*}\leq \tau_{r}$. 
If $\tau^{*}> \tau_{r}$, then there exists some trace $t \in \Tt_K$, such that 
\[ 
\tau_{r} < \sem{T}^{\mathsf{Sup}}_{*}(t,w) \leq \tau^{*}
\]
for all $w \in \sem{T}^{\supf}_{*}(t)$ and $w \in L$.   
As we removed the edges in descending order of weights, and are still able to satisfy the \buchi condition, for every trace $t \in \Tt_K$, there must be some word $w'$ such that $w' \in L$ and $\sem{T}^{\mathsf{Sup}}_{*}(t,w') \leq \tau_{r}$, which is again a contradiction.

We note that the size of $G^\times$ is polynomial in the sizes of $K,T$ and $B$. As such the number of edges are also polynomial the sizes of $K,T$ and $B$. We follow the procedure from above and note that we can at most remove all the edges. Lastly, from Theorem~\ref{thm:positional}, we have the complexity of solving \buchi games to be in P. Thus the problem is in P.
The proof follows in a similar fashion for the $\limsupf$ aggregator function.
\qed
\end{proof}


\section{Proofs from Section~\ref{sec:verification}} 

\subsection{Proof of Theorem~\ref{thm:dsum-verif}}

\begin{proof}
First we show that the optimal threshold $\tau^{*}$ is the minimum infinite discounted cost path.
Let value of the minimum discounted cost path be $\tau_{r}$, and let us suppose $\tau^{*} < \tau_{r}$, then there must exist a trace $t \in \Tt_K$ and a word $w \in L$, such that $\sem{T}^{\mathsf{DSum}}_{*}(t,w) < \tau_{r}$. 
As the trace $t$ has a rewriting that is in $L$, it must be accepted by the synchronized product, however if $t$ is accepted by $g^\times$, then it must follow an infinite path in $G^\times$.
The discounted cost of this infinite path must be at least as large as $\tau_{r}$ which is a contradiction.

 Let $W$ be an upper bound on the weights in the graph $G^\times$. Then for any $\varepsilon > 0$, there exists some $k\in \mathbb{N}$ such that $\lambda^{k} \frac{W}{1 - \lambda} \leq \varepsilon$.  
 We follow the discounted-optimal strategy obtained from the above optimality equation for $k$ steps and then choose an arbitrary strategy to an accepting cycle and follow an accepting cycle thereafter. 
 It follows that the resulting path is an accepting path with cost bounded by $\tau^{*} + \varepsilon$.
  It is easy to see this : consider a path with all the edges having cost $W$. 
 The discounted sum for this path is $\frac{W}{1-\lambda}$. 
 If we follow the discounted strategy for the first $k$ steps on the path as defined above, then any of its infinite extensions will cost no more than $\tau^{*} + \lambda^k \frac{W}{1-\lambda} \leq \tau^{*} + \varepsilon$. 
 We note that this is an accepting path in $G^\times$, and so for any $\varepsilon > 0$, $\tau_{r} + \varepsilon \in \mathbb{B}$ .
 
 We note that the size of the graph $G^\times$ is polynomial in $K,T$ and $A$.
 Solving the \buchi game on this graph and the above linear program is in $P$.
 \qed
\end{proof}

	
\subsection{Proof of Theorem~\ref{thm:mean-verif}}
\begin{proof}
    First we claim that the optimal threshold $\tau^{*}$ is the value of the average cost of cycle $C_1$, where $C_1$ is the least cost cycle that can be reached and is reachable from some accepting cycle $C_2$.
	Suppose $\tau^{*}$ is lesser than the mean cost of $C_1$. 
	Then there must exist some edges that are visited infinitely often such that their average cost is less than the average cost of $C_1$.
	The graph $G^\times$ has a finite number of edges, so these edges must be a part of some cycle as they are visited infinitely often. 
	However this cycle must also be reachable and reach some accepting cycle.
	This must imply that there exists a cycle with an average cost that is strictly less than the average cost of $C_1$ which contradicts our assumption.

    If $C_1$ and $C_2$ are the same, \textit{i.e.}, $C_1$ is an accepting cycle then the value of $\tau^{*}$ is equal to the average cost of $C_1$. Without loss of generality we can assume that the two cycles $C_1$ and $C_2$ are incident at a common point (by collapsing the path in between them). 
    We can collapse cycles $C_1$ and $C_2$ into a common point by extending the cycle $C_2$ to constitute a larger cycle that passes through $C_1$ and returns to $C_2$. 
	Moreover, it does not affect the least mean cost of the graph.
	We now prove $\tau^{*} = \frac{d_1}{n_1}$ by giving a strategy to find an accepting path of the graph with mean cost as $\frac{d_1}{n_1}$.
	To find the value of $\tau^{*}$ when $C_1$ and $C_2$ are not the same, we consider the strategy of cycling between $C_1$ and $C_2$ in rounds. At any round $i$, we cycle $2^{i}$ many times in $C_1$ and once in cycle $C_2$. We can write the value of the average computed after round $i$ as 
	
\[a_i = \dfrac{(1 + 2 + \ldots + 2^{i}) d_1 + i \cdot d_2 }{(1 + 2 + \ldots + 2^{i}) n_1 + i\cdot n_2} = \dfrac{(2^{i+1} - 1) d_1 + i\cdot d_2 }{(2^{i+1} - 1) n_1 + i\cdot n_2} \]

From the properties of limits of real sequences we note that the sequence $a_0, a_1, \ldots $ converges to $\frac{d_1}{n_1}$, and thus we have $\tau^{*} = \frac{d_1}{n_1}$.
We note that the sequence $a_1, a_2 \ldots$ is a Cauchy sequence and so for any $\varepsilon > 0$, there exists some $M$ such that for all $i,j \geq M$, we have $|a_i - a_j| \leq \varepsilon$. 
Thus for any $\varepsilon > 0$, there exists some $k$ such that $a_k = \tau^{*}+ \varepsilon$. 
We can thus follow a strategy where we alternate cycling $2^{k+1}$ times in $C_1$ and once in $C_2$ to approximate $\tau^{*}$ to within $\varepsilon$.

Now notice that the graph $G^\times$ is polynomial in the sizes of $K,T$ and $A$.
We find the least mean cost cycle that is co-accessible from an accepting vertex as follows.
Secondly we use Karp's algorithm~\cite{karp_mean_value} to find the cost of the least mean cycle and note that this cycle can be reached and is reachable from an accepting vertex. 
The mean value of this cycle corresponds to the optimal threshold and so this can be done in P.
\qed
\end{proof}
\subsection{Proof of Theorem~\ref{thm:sup-verif}}
    Computing optimal threshold $\tau^{*}$ for $\supf$ and $\limsupf$-RMs is in P.
\label{ap:sup-verif}
	\begin{proof}
 We have $\tau^{*} = k$ as follows, first $\tau^{*} \geq k$. Let us suppose otherwise, then there must exist some path that starts from some $v_i \in V_I$ and cycles infinitely often on state $v_f$ such that the supremum of that path is less than $k$, but this would contradict our assumption. We know that such a $k$ exists as it corresponds to the supremum of some lasso that starts from some $v_i \in V_I$ and visits infinitely often some $v_f \in V_F$.
 
 Now we show the value $k$ can be computed in $P$ as follows. The graph $G^\times$ is polynomial in $K,T$ and $A$ and has a polynomial number of edges.
 Pruning this graph to contain only those vertices which can reach the accepting state can also be done in polynomial time.
 While we seek to find the supremum value in all lassos, we do not need to enumerate all the lassos. We simply need to check the supremum value in the problem of reachability to an accepting state, and then check reachability from that accepting state to itself from all vertices $v_i \in V_I$ and store the maximum value for each.
Thus the two step reachability problem can be solved in $P$.
The proof follows in a similar fashion for the $\limsupf$-RMs.
\qed
\end{proof}

\section{Proofs from Section~\ref{sec:mask}} 

	\subsection{Proof of Lemma~\ref{lem:dsum1}}
	\label{app:lem-dsum1}
\begin{proof}
	Let $r$ be a partial run of length $n$ of $T$. Since $T$ is trim, there exists a continuation $r'$ of $r$, and moreover we have $\dsum(rr') = \dsum(r) + \lambda^n \dsum(r')$.
	We have $\dsum(r') \leq \Sigma_{i=0}^{\infty} \lambda^i W = \frac{W}{1 - \lambda}$ where $W$ is the largest weight of any transition in $T'$. Let $B_n = \lambda \frac{W}{1 - \lambda}$.
	Let $n^*$ be the smallest non-negative integer such that $B_{n^*} \leq \varepsilon /2$. (it exists since $b_n$ is strictly decreasing of limit 0).
	Assume that the length of $r$ is greater than $n^*$ i.e. $n \geq n^*$.
	As a consequence $B_n \leq B_{n^*}$.
	Since $v$ is $\varepsilon$-isolated, we have two cases:
	\begin{enumerate}
		\item If $\dsum(rr') \leq \tau - \varepsilon$ then $\dsum(r) \leq \tau - \varepsilon$ since $\dsum(r) \leq \dsum(rr')$ by non-negativity of the weights of $T'$.
		\item If $\dsum(rr') \geq \tau + \varepsilon$ then $\dsum(r) \geq \tau + \varepsilon - \lambda^n \dsum(r')$. Moreover, $\lambda^n \dsum(r') \leq B_n \leq B_{n^*} \leq \varepsilon/2$ by construction. So $-\lambda^n \dsum(r') \geq -\varepsilon/2$ which implies $\dsum(r) \geq \tau + \varepsilon/2$. 
	\end{enumerate}
 
 	We have shown that either of the above cases occurs since $\tau$ is $\varepsilon$ isolated. 
 	Next we prove that, for all continuation $r'$ of $r$ we have (i) $\dsum(r) \leq \tau - \varepsilon$ implies $\dsum(rr') \leq \tau-\varepsilon$ and (ii) $\dsum(r) \geq \tau + \varepsilon /2$ implies $\dsum(rr') \geq \tau + \varepsilon$.
 	In the first case, assume by contradiction that some continuation $r'$ of $r$ satisfies $\dsum(rr') \geq v + \varepsilon$. As a consequence  $\lambda^n \dsum(r') \geq 2 \varepsilon$, which is impossible since $\lambda^n \dsum(r') \leq B_n \leq B_{n^*} \leq \varepsilon /2$.
 	In the second case, if $\dsum(r) \geq \tau + \varepsilon/2$ then any continuation $r'$ of $r$ satisfies $\dsum(rr') \geq \dsum(r) > \tau + \varepsilon /2$. 
 	Since $\tau$ is $\varepsilon$-isolated, we get $\dsum(rr') \geq v+\varepsilon$.
 	\qed
	\end{proof}

\subsection{Proof of Theorem~\ref{thm:dsum2}}\label{app:thm-dsum2}

	Let $T'$ be a weighted omega transducer with $\dsum_\lambda$ aggregator function, $\tau \in \Qq$, and $L$ an $\omega$-regular language $L$ given by an NBA. For all $n$, we can construct an NBA $A_n$ such that:
\begin{enumerate}
	\item $L(A_n) \subseteq L(A_{n+1})$
	\item $L(A_n) \subseteq \overline{N'} \cap \dom(T')$
\end{enumerate}
Moreover, if $\tau$ is $\varepsilon$-isolated, there exists $n^*$ such that $L(A_{n^*}) = \overline{N'} \cap \dom(T)$. 

\begin{proof}
	Let $B_n = \lambda^n \frac{W}{1-\lambda}$ as defined above where $W$ is the maximum weight of the edges. 
	A run $r$ on a pair $(u_1, u_2)$ is called bad if $\dsum(r) \leq \tau$ and $u_2 \in L$ and $r$ is accepting.
	Note that $u_1 \notin N'$.
	A finite run $r$ is called dangerous if $|r| \geq n$ and $\dsum(r)  \leq \tau - B_n$.
	A dangerous run $r$ can possibly be extended to a bad run $rr'$.
	For the extended bad run, we only need to check that the output satisfies the $\omega$-regular property represented by $L$, since the cost cannot grow more than $\tau$. 
	We exploit this idea for construction of automaton $A_n$.
	Intuitively, $A_n$ recognizes all those words for which there exists a dangerous prefix run of length $n$ at most which can be extended to a bad run (Figure~\ref{fig:dsum-kernel}).
	
	\begin{figure}[t]
		\centering
		\includegraphics[scale=0.4]{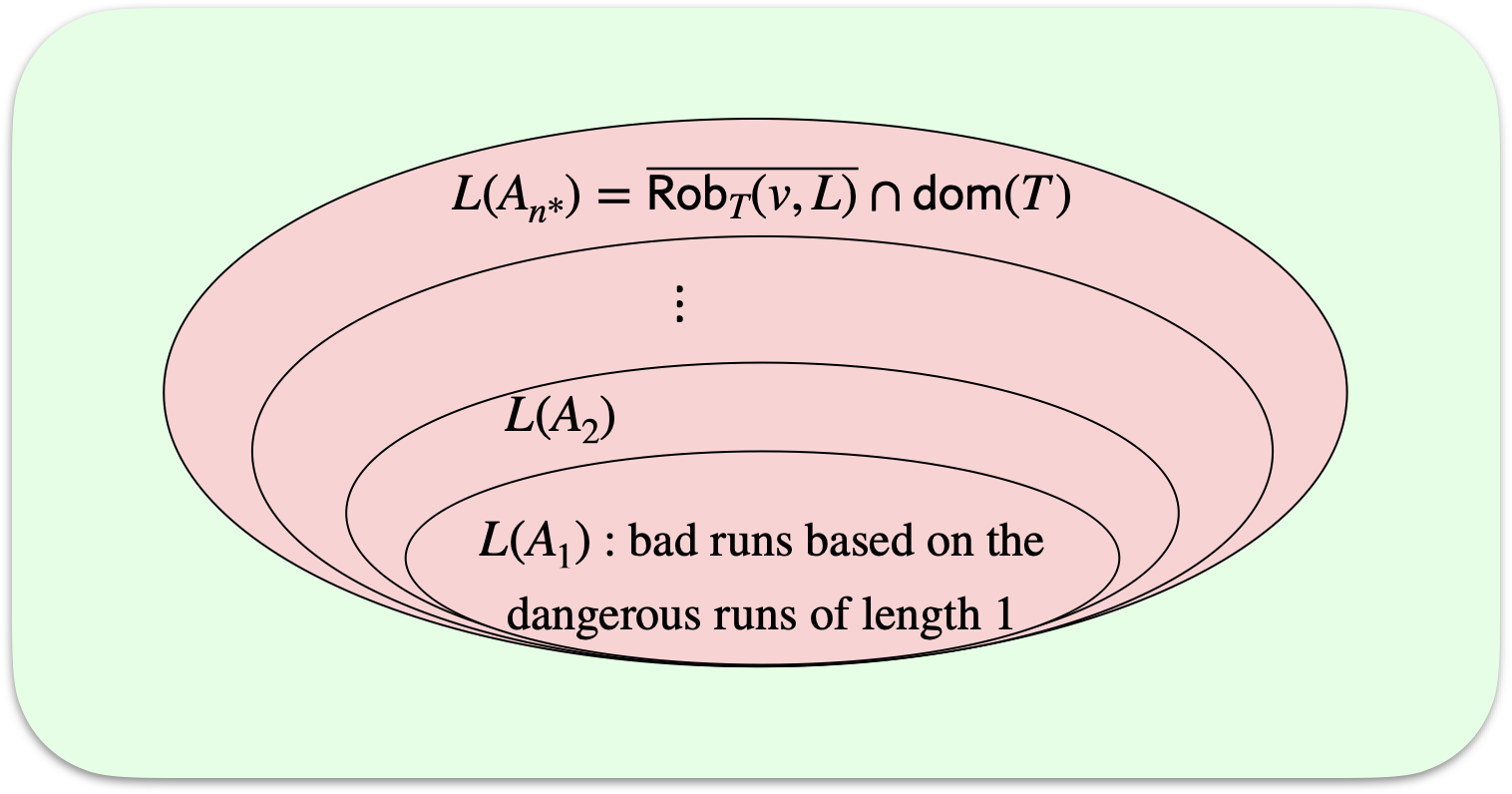}
		\caption{Successive approximations of the maximal subset.}
		\label{fig:dsum-kernel}
	\end{figure}
	
	Let $\runs$ be the runs of $T'$ of length at most $n$, and $Q$ its set of states. 
	We assume that for all  $(u_1, u_2)$ recognized by $T'$, $u_2 \in L$ holds. 
	This can be assured by taking synchronized product of $T'$ with the automaton $B$ recognizing $L$. 
	Let us formally construct $A_n$:
	Its set of states is union of states of $T'$ and $\runs$. 
	Transitions are defined as follows: for all runs $r \in \runs$ of length $n-1$ at most ending in some state $q$, for all $\sigma \in \Sigma_\varepsilon$, if there exists a transition $\delta$ of $T'$ from state $q$ on reading $\sigma$ then we have the transition $r \xrightarrow{\sigma} r\delta$ in $A_n$. 
	From any run $r \in \runs$ of length $n$: (i) if $r$ is not dangerous, we do not have any outgoing transitions in $A_n$, (ii) if $r$ is dangerous, we have a transition to its end state from $Q$ while reading $\varepsilon$. 
	Furthermore, we have all the transitions $\delta$ of $T'$ be present in the transitions of $A_n$. 
	Accepting states are the same as that of $T'$.
	
	It is easy to see that $L(A_n) \subseteq L(A_{n+1})$ for all $n$.
	Let $w \in L(A_n)$ with an accepting run $\rho = \rho_1 \rho_2$ such that $\rho_1 \in (\runs)^*$ and $\rho_2 \in Q^\omega$. 
	By construction of $A_n$, the last state of $\rho_1$, say $r_n$,  represents the dangerous run of length $n$ and from its last state, say $q$, we have $\varepsilon$ transition to $\rho_2$. 
	Moreover, $r_n \rho_2$ is a bad run since $r_n$ is a dangerous run and $\rho_2$ satisfies the regular property.
	Since $r_n$ was dangerous at step $n$, $\dsum(r_n) \leq v - B_n$, and by definition of $B_n$, $B_n > B_{n+1}$. Thus, 
	\[
	\dsum(r_nq) \leq \dsum(r_n) \leq \tau - B_n \leq \tau - B_{n+1}.
	\]
	Hence $r_nq$ is dangerous at step $n+1$, and we get an accepting run of $w$ in $A_{n+1}$.
	
	Suppose $\tau$ is $\varepsilon$-isolated for some $\varepsilon > 0$. 
	Let $n^*$ be the smallest non-negative integer such that $B_{n^*} \leq \varepsilon /2$ (same as in above proof). 
	We wish to show $L(A_{n^*}) = \overline{N'} \cap \dom(T')$. One side direction is easy that $L(A_{n^*}) \subseteq \overline{N'} \cap \dom(T')$. 
	For the other side proof, let $w \in \dom(T')$ and $w \notin \overline{N'}$. 
	This implies that there exists $(w, w',c) \in \sem{T'}$ such that its accepting run $r$ has $\dsum(r) \leq \tau$ and $w' \notin L$. 
	We can say that $r$ is bad. 
	Since $\tau$ is $\varepsilon$-isolated, $\dsum(r) \leq \tau -\varepsilon$. 
	
	By Lemma~\ref{app:lem-dsum1}, we have $\dsum(r[:n^*]) \leq \tau - \varepsilon$ where $r[:n^*]$ represents the prefix of $r$ of length $n^*$.
	We know that $B_{n^*} \leq \varepsilon/2$, therefore 
	\[
	\dsum(r[:n^*]) \leq \tau -\varepsilon \leq \tau - \varepsilon/2 \leq \tau - B_{n^*}.
	\]
	Hence $r[:n^*]$ is a dangerous partial run, its continuation $r$ will have a corresponding accepting path in $A_{n^*}$ by the definition of $A_{n^*}$.
	\qed
\end{proof}

\subsection{Proof of Lemma~\ref{lem:sup1}}\label{app:lem-sup1}
\begin{proof}
		We construct $A^{> \tau} = (Q^{> \tau}, q_i^{> \tau}, \delta^{> \tau}, F^{> \tau})$ as follows :
	\begin{itemize}
		\item  $Q^{> \tau} = Q \times \{0,1\} $;
		\item $q_i^{> \tau} = (q_i,0) $;
		\item  $((q,i), \sigma , (p,j) ) \in \delta^{> \tau} $ if $(q, \sigma, p) \in \delta$ and one of the following is satisfied:
		\begin{itemize}
			\item $i = 0$, $j = 0$ and $ \gamma(q,p) \leq \tau$,
			\item $i = 0$, $j = 1$ and $\gamma (q,p) > \tau$,
			\item $i = 1$ and $j = 1$; and 
		\end{itemize}
		\item $F^{> v} = Q^{> v} \times {1}$.
	\end{itemize}
	It is clear that $A^{> \tau}$ accepts a word $w \in \Sigma^{\omega}$ iff $ U(w) > \tau$.
	\qed
\end{proof}

\subsection{Proof of Theorem~\ref{thm:sup-mask}}\label{app:sup-mask}
\begin{proof}
	We note that $N'$ contains the set of words $u$ such that for every word $w \in \Gamma^{\omega}$, if $\sem{T'}^{\mathsf{Sup}}_{*}(w_1, w_2) \leq \tau$, we have $w \notin L$.
	The complement of $N'$ can be defined as  
	$$\overline{N'} = \{u \mid \exists w, \sem{T'}^{\mathsf{Sup}}_{*}(u, w) \leq \tau \wedge w \in L\}.$$
	We show it is $\omega$-regular.
	Let $A$ be the NBA recognizing $L$.
	First, we take the synchronizing product of $T'$ and $A$, denoted by $T' \otimes a$, where $A$ works on outputs of $T'$ and the product accepts those words whose rewriting is in $L$.
	Hence,
	$$\overline{N'} = \{w_1 \mid \exists w_2, \sem{T' \otimes A}^{\mathsf{Sup}}_{*}(w_1, w_2) \leq \tau\}.$$
	We can project out the outputs from product transducer and get a $\sup$ automaton $U$.
	Now, the problem reduces to 
	$\overline{N'} = \{w_1 \mid U(w_1) \leq \tau\}$.
	Complementing again, we get $N' = \{w_1 \mid U(w_1) > \tau\}$.
	Now, we can apply Lemma~\ref{lem:sup1} to get an NBA for the language of robust words.
	\qed
\end{proof}


\end{document}